\newcommand{\old}[1]{}
\newcounter{algorithmLine}
\newtheorem{theorem}{Theorem}
\newtheorem{lemma}{Lemma}
\newtheorem{definition}{Definition}
\newcommand{\sq}{\hbox{\rlap{$\sqcap$}$\sqcup$}}
\newcommand{\qed}{\hspace*{\fill}\sq}
\newenvironment{proof}{\noindent {\bf Proof:}}{\qed\par\vskip 4mm\par}
\title{Breaking the $O(n^2)$ Bit Barrier: Scalable Byzantine agreement with an Adaptive Adversary}
\author{Valerie King \thanks{val@cs.uvic.ca; Department of Computer Science, University of Victoria, P.O.
Box 3055, Victoria, BC, Canada V8W 3P6.  This research was supported by an NSERC grant.} \and Jared Saia \thanks{saia@cs.unm.edu; Department of Computer Science, University of New Mexico, Albuquerque, NM 87131-1386. This research was partially supported by NSF CAREER Award 0644058, NSF CCR-0313160, and an AFOSR MURI grant.}}
\date{}                                           
\begin{document}
\maketitle

\thispagestyle{empty}

\begin{abstract}

We describe an algorithm for Byzantine agreement that is scalable in the sense that each processor sends only $\tilde{O}(\sqrt{n})$ bits, where $n$ is the total number of processors.  Our algorithm succeeds with high probability against an \emph{adaptive adversary}, which can take over processors at any time during the protocol, up to the point of taking over arbitrarily close to a $1/3$ fraction.  We assume synchronous communication but a \emph{rushing} adversary.  Moreover, our algorithm works in the presence of flooding: processors controlled by the adversary can send out any number of messages.  We assume the existence of private channels between all pairs of processors but make no other cryptographic assumptions.  Finally, our algorithm has latency that is polylogarithmic in $n$.  To the best of our knowledge, ours is the first algorithm to solve Byzantine agreement against an adaptive adversary, while requiring $o(n^{2})$ total bits of communication.
\end{abstract}

\ \\ \ \\ \ \\ \ \\ \ \\ \ \\ \ \\ \ \\ \ \\ \ \\ \ \\ \ \\ \ \\ \ \\ \ \\ \ \\ \ \\ \ \\ \ \\

This paper should \emph{not} be considered for the best student paper award.

\pagebreak
\setcounter{page}{1}

\section{Introduction}

Recent years have seen a rapid increase in the number of networks that are characterized by large sizes and little admission control.  Such networks are open to attacks by malicious users, who may subvert the network for their own gain.  To address this problem, the research community has been recently revisiting techniques for dealing with nodes under the control of a malicious adversary~\cite{kotla2007zyzzyva,clement-making,1529992,anderson2002worldwide}.

The Byzantine agreement problem, defined in 1982, is the \emph{sine qua non} of handling malicious nodes.   With a solution to Byzantine agreement, it is possible to create a network that is reliable, even when its components are not.  Without a solution, a compromised network cannot perform even the most basic computations reliably.  A testament to the continued importance of the problem is its appearance in modern domains such as sensor networks~\cite{shi2004designing}; mediation in game theory~\cite{ADGH,ADH}; grid computing~\cite{anderson2002worldwide}; peer-to-peer networks~\cite{rhea2003pond}; and cloud computing~\cite{wright2009contemporary}.  However, despite decades of work and thousands of papers, we still have no practical solution to Byzantine agreement for large networks.  One impediment to practicality is suggested by the following quotes from recent systems papers (see also~\cite{castro2002practical,Malkhi97unreliableintrusion,amir2006scaling,agbaria2003overcoming,1098025}):

\begin{itemize}
\item \emph{``Unfortunately, Byzantine agreement requires a number of messages quadratic in the number of participants, so it is infeasible for use in synchronizing a large number of replicas''}~\cite{rhea2003pond}

\item \emph{``Eventually batching cannot compensate for the quadratic number of messages [of Practical Byzantine Fault Tolerance (PBFT)]''}~\cite{CMLRS}

\item \emph{``The communication overhead of Byzantine Agreement is inherently large''}~\cite{Cheng2009219}

\end{itemize}

In this paper, we describe an algorithm for Byzantine agreement with only $\tilde{O}(n^{1/2})$ bit communication per processor overhead.  
Our  techniques also lead to solutions with $\tilde{O}(n^{1/2})$ bit complexity for universe reduction and a problem we call  the {\it  global coin subsequence problem},  generating a polylogarithmic length string, most of which are global coinflips generated uniformly and independently at random and agreed upon by all the good processors .  Our protocols are polylogarithmic in time and, succeed with high probability.\footnote{That is probability $1-1/n^{k}$ for any fixed $k$}

We overcome the lower bound of~\cite{DR} by allowing for a small probability of error.  This is necessary since this lower bound also implies that any randomized algorithm which always uses no more than $o(n^{2})$ messages must necessarily err with positive probability, since the adversary can guess the random coinflips
and achieve the lower bound if the guess is correct.

\subsection{Model and Problem Definition}
 We assume a fully connected network of $n$ processors, whose IDs are common knowledge.  Each processor has a private coin.  We assume that all communication channels are \emph{private} and that whenever a processor sends a message directly to another, the identity of the sender is known to the recipient, but we otherwise make no cryptographic assumptions.  We assume an {\it adaptive} adversary. That is, the adversary can take over processors at any point during the protocol up to the point of taking over up to a $1/3 - \epsilon$ fraction of the processors for any positive constant $\epsilon$.  The adversary is malicious: it  chooses the input bits of every processor,  bad processors can engage in any kind of deviations from the protocol, including false messages and collusion, or crash failures, while the remaining processors are good and follow the protocol. Bad processors can send {\it any} number of messages.

We assume a synchronous model of communication.  In particular, we assume there is a known upper bound on the transit time of any message and communication proceeds in rounds determined by this transit time.  The time complexity of our protocols are given in the number of rounds.  However, we assume a \emph{rushing} adversary that gets to control the order in which messages are delivered in each round.  In particular, the adversary can receive all the messages sent by good processors before sending out its own messages.

In the {\it  Byzantine agreement}  problem, each processor begins with either a 0 or 1. An execution of a protocol is {\it successful} if all processors  terminate and, upon termination, agree on a bit held by at least one good  processor at the start.  

The  {\it  global coin subsequence} $(s,t)$  problem  generates a string of length $s$ words, $t$ of  which are global coinflips generated uniformly and independently at random and agreed upon by all the good processors . We call $s$ an  {\it unreliable global coin sequence}.

\subsection{ Results}

We use the phrase {\it with high probability} ({\it w.h.p.}) to mean that an event happens with probability at least $1-1/n^c$
for every constant $c$ and sufficiently large $n$. For readability, we treat $\log n$ as an integer throughout. 

In all of our results, $n$ is the number of processors in a synchronous message passing model with an adaptive, rushing adversary that controls less than $1/3- \epsilon$ fraction of processors, for any positive constant $\epsilon$  We have three main results.  The first result makes use of the second and third ones, but these latter two results may be of independent interest. First, we show:

\begin{theorem} \label{t:main} {\sc [Byzantine agreement]}
There exists a protocol which w.h.p. computes Byzantine agreement, runs in polylogarithmic time,  and uses $\tilde{O}(n^{1/2})$ bits of communication.
\end{theorem}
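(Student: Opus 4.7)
The plan is to use the two auxiliary protocols promised in this paper --- universe reduction and global coin subsequence --- as black boxes, and couple them with a classical randomized Byzantine agreement iteration (in the style of Rabin or Feldman--Micali) that only needs a shared, mostly-honest coin flip per round. The architecture has three phases: (i) run the global coin subsequence protocol once, producing a polylogarithmic-length string $S$ such that all good processors agree on $S$ and most of its entries are unbiased random bits unknown to the adversary until they are used; (ii) run an iterative threshold/majority agreement loop on all $n$ processors in which each round consumes one entry of $S$ as the ``global coin''; (iii) output the resulting bit.

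For phase (i) I would simply invoke the claimed $\tilde{O}(n^{1/2})$-bit subprotocol. For phase (ii) the naive implementation --- each processor broadcasts its current bit and everyone takes a majority --- would cost $\Theta(n)$ bits per processor per round, which is too expensive. Instead, I would use the quorum/committee structure produced by universe reduction: each processor reports its current bit to a small number of $\tilde{O}(n^{1/2})$-sized committees; each committee internally runs classical Byzantine agreement on the tally, then relays the outcome back through another layer of committees to every processor. Since each processor contacts only $\text{polylog}(n)$ committees and each contact exchanges $\tilde{O}(1)$ bits, the total communication across polylogarithmically many rounds remains $\tilde{O}(n^{1/2})$. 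Standard analysis of Rabin's framework then gives correctness: each uncorrupted global coin flip causes termination with constant probability, so because $S$ contains $\Omega(\log n)$ good entries w.h.p.\ the protocol converges within $O(\log n)$ uses w.h.p. Validity follows from the fact that the iteration only ever adopts a bit currently held by some good processor.

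The hard part will be phase (ii): making the bit-counting and coin-distribution robust against an \emph{adaptive, rushing, flooding} adversary while keeping per-processor bandwidth sublinear. Because the adversary can corrupt processors mid-protocol, the universe-reduction machinery has to produce committees that retain a good $>2/3$ majority throughout the entire protocol, not only at the moment of selection; otherwise the adversary simply corrupts whichever committee is currently tallying. Flooding forces good processors to refuse to process messages that fall outside their prescribed quorum assignments, since bad processors can otherwise inflate incoming traffic past the $\tilde{O}(n^{1/2})$ budget. Rushing means that coin flips must be committed to and hidden until every good processor has committed to the messages whose fate those coin flips decide, or else the adversary biases the outcome. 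I expect most of the proof effort to go into specifying a quorum-broadcast primitive that simultaneously resists all three adversarial behaviors and composes cleanly with both the outer Rabin-style iteration and the coin-subsequence subroutine, and into the probabilistic argument (Chernoff-style over random committee assignments) that every one of the polylogarithmically many committees keeps its good majority w.h.p.\ against adaptive corruptions.
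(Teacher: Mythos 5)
Your decomposition differs from the paper's in a way that exposes a genuine gap: you assume the global coin subsequence protocol hands you a string $S$ that \emph{all} good processors agree on, but Theorem~\ref{t:ae} (and the modified protocol of Section~\ref{GCS}) only guarantees \emph{almost-everywhere} agreement --- a $(1-1/\log n)$ fraction of good processors. Closing that $1/\log n$ gap is the entire point of Theorem~\ref{t:ae2e} and is where the $\tilde O(\sqrt n)$ per-processor cost actually comes from. Without it, a $1/\log n$ fraction of good processors remain confused about $S$, and your Phase (ii) committee relay inherits exactly the same deficit: the confused processors do not know which committee's tally to trust, and under flooding by an adaptive adversary they cannot distinguish a legitimate committee report from a spoofed one. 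The paper even flags why this is hard in Section~2: on a sparse network everywhere agreement is impossible outright, and by the cited $\Omega(n^{1/3})$ lower bound any protocol in which processors pre-commit to whom they will listen cannot do much better --- your committee-broadcast scheme, as described, falls inside that restrictive model.

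The paper's actual composition is therefore the reverse of yours. It first runs almost-everywhere BA over the tournament/secret-sharing machinery to get almost all good processors to agree on the output bit \emph{and} on a stream of hidden random numbers; it then repeatedly runs the AlmostEverywhereToEverywhere protocol (Algorithm~3), in which every processor sends $O(\log n)$ requests tagged with a random label in $[1,\sqrt n]$, and only requests whose label matches a freshly revealed global coin $k$ are answered. Because $k$ is hidden from the adversary until after the requests are sent, the adversary cannot selectively flood or corrupt the responders; and because answering processors cap themselves at $\sqrt n \log n$ responses, no good processor exceeds the $\tilde O(\sqrt n)$ budget. A single iteration only succeeds with probability roughly $\Theta(1/\log n)$, so the outer loop repeats $\Theta(\log^2 n)$ times with fresh coins from $S$ to amplify to high probability. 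Your proposal has no analogue of this label-and-reveal mechanism, which is the single new idea needed to go from the almost-everywhere primitives to the everywhere guarantee of Theorem~\ref{t:main}; the rest of your Phase (ii) (Rabin-style iteration driven by a mostly-good coin sequence) is, roughly, what the paper already does \emph{inside} the almost-everywhere protocol (Algorithm~5, Theorem~\ref{t:aebasparse}), not as the outer driver over all $n$ processors.
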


Our second result concerns \emph{almost-everywhere} Byzantine agreement and,  \emph{almost-everywhere} global coin subsequence where a $(1-1/\log n)$ fraction of the good processors come to agreement on a good processor's input bit, or the random coin flip, resp.

\begin{theorem} \label{t:ae} {\sc [Almost Everywhere Byzantine agreement]}
For any $\epsilon >0$, there exists a protocol which w.h.p. computes almost -everywhere Byzantine agreement, runs in time $O((\log^{4+\epsilon}/\log \log n)$ and uses $\tilde{O}(n^{4/\epsilon})$ bits of communication per processor. In addition, this protocol can be used to solve an almost everywhere global coin subsequence $(s,2s/3) $ problem for an additional cost of $O(\log n/\log \log n)$ time and $\tilde{O}(n^{4/\epsilon})$ bits of communication per bit of $s$.   
\end{theorem}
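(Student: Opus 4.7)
The plan is to combine Byzantine agreement within polylog-sized random committees with a layered inter-committee propagation network to achieve almost-everywhere agreement, then to recycle the same machinery to expose a large supply of almost-everywhere-agreed coin flips.

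First I would form $\Theta(n/q)$ random committees of size $q = \mathrm{poly}\log n$, using a short shared seed that the processors agree on via an initial bootstrap phase. A Chernoff/Hoeffding bound gives that all but a $1/\log n$ fraction of these committees are \emph{good}, i.e.\ contain at least a $2/3 + \epsilon/2$ fraction of honest members; to handle the adaptive adversary I would apply the same bound to a martingale on the corruption schedule, using the slack $\epsilon$ in the corruption budget to absorb the variance. Inside each committee I would invoke a classical polynomial-cost Byzantine agreement protocol (Ben-Or style or polynomial PSL) to agree on the input bit held by a designated representative, paying only $\mathrm{poly}\log n$ bits per processor and $O(\log n)$ rounds per invocation.

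Next I would arrange the committees as nodes of a layered network of depth $O(\log^{3+\epsilon} n / \log \log n)$ and propagate the committee-agreed bits by majority flooding. An Upfal-style almost-everywhere analysis ensures that after this many rounds all but a $1/\log n$ fraction of good processors hold the same value, which must be the input of some honest processor. Per-processor bit cost decomposes as (depth)$\times$(committee agreement cost)$\times$(number of incident links), and tuning the layer widths so that the per-layer error is $O(1/\mathrm{poly}\log n)$ yields the claimed $\tilde{O}(n^{4/\epsilon})$ bit bound per processor and time bound $O(\log^{4+\epsilon} n / \log \log n)$.

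For the global coin subsequence extension, I would have each designated committee flip $\Theta(1)$ coins per round, revealed only after its internal agreement is complete, and then pipe those coins through the same propagation network. Pipelining lets a new coin enter the network each round at an amortized additional latency of $O(\log n / \log \log n)$ per coin bit. Since at least a $2/3$ fraction of committees are good and a good committee's coin is uniform and unpredictable to the adversary (it is committed by the committee's honest majority before the adversary can bias it), at least $2s/3$ of the final $s$ outputs are uniform and independent, giving the $(s, 2s/3)$ guarantee. The main obstacle is analyzing the adaptive adversary during the propagation phase: having observed earlier honest messages, the adversary can try to corrupt precisely those committees or edges about to carry the pivotal bit in the next layer. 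The crux is to show that (i) the private channels conceal the identity of future messengers, (ii) the $\epsilon$ slack in the corruption budget leaves margin after any such targeting, and (iii) the robustness of the layered network absorbs the residual $1/\log n$ fraction of compromised committees; a union bound over the $\mathrm{poly}(n)$ committee actions, combined with the Azuma martingale bound on adaptive corruption, completes the argument.
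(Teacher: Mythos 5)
The central gap is that your committee-based construction is exactly the approach the paper explicitly identifies as \emph{prima facie} impossible against an adaptive adversary. You form $\Theta(n/q)$ committees of size $q = \mathrm{poly}\log n$ from a shared seed, then run agreement inside each committee. But once the seed is published, the committee memberships become common knowledge, and an adaptive adversary with budget $(1/3-\epsilon)n$ can deterministically and completely corrupt $\Omega(n/\mathrm{poly}\log n)$ of these small committees, including precisely the ``designated'' ones whose outputs you rely on. The Azuma/martingale argument you invoke cannot rescue this: that tool bounds deviations of a random process, but the adversary's targeting is not random, it is a worst-case choice made after seeing the committee structure. Similarly, your appeal to private channels concealing ``the identity of future messengers'' does not hold in this model---private channels hide message \emph{contents}, but the sampler/committee topology is public, so the adversary knows exactly who will speak.

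The paper's actual proof avoids this by never electing \emph{processors}. It elects \emph{arrays of random words}, each initially generated by a processor and immediately \emph{secret-shared} into a level-1 node; as an array wins elections and climbs the $q$-ary tree, its shares are iteratively re-shared (each share becomes a new dealer's secret) across exponentially growing sets of size $q^{\ell} k_1$, with the old shares erased (Lemma~\ref{secret} and Section~\ref{network}). Consequently, by the time an array reaches level $\ell$, the adversary would need to corrupt a constant fraction of $q^{\ell} k_1$ processors to learn or bias its contents, which exceeds its budget for the levels that matter. The coins are revealed only when needed, by $sendDown$/$sendOpen$ which broadcast shares to \emph{all} leaves of the relevant subtree simultaneously, so the adversary cannot selectively suppress a coin after seeing it exposed. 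Feige-style lightest-bin elections on bin choices (themselves agreed via Rabin-style agreement fed by these protected coins) give the representativeness bound of Lemma~6. Your proposal contains none of these mechanisms---no iterative secret sharing, no array election, no deferred-reveal protocol---and so it does not survive the adaptive adversary that the theorem is quantified over. A secondary issue: your ``initial bootstrap phase'' to agree on a shared seed is itself an instance of the agreement problem you are trying to solve, introducing circularity that the paper avoids entirely (no global shared randomness is assumed at the outset).
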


Our third result is used as a subroutine of the previous protocol. 
\begin{theorem} \label{t:aebasparse}{\sc[Almost Everywhere Byzantine Agreement with Unreliable Global Coins}]
Let $S$ be a sequence of length  $s$ containing a subsequence of uniformly and independent random coinflips of length  $t$ known to 
$1-O(1/\log n)$  good processors.  Let $C_{1}$ and $C_{2}$ be any positive constants.  Then there is a protocol which runs in time $O(s )$ with bit complexity $O(\log n)$ such that with probability at least $1-e^{-C_{1}n} + 1/2^{t}$, all but $C_{2}n / \log n$ of the good processors commit to the same vote $b$, where $b$ was the input of at least one good processor.   
\end{theorem}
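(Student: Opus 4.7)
My plan is to adapt the classical Feldman--Micali randomized Byzantine agreement protocol to this setting where only most good processors know each shared coinflip.

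First, I would design the protocol as a sequence of $O(s)$ voting rounds, one per position of $S$. In each round $i$, every processor broadcasts its current vote (an $O(\log n)$-bit message containing a bit and possibly a ``lock'' flag), counts incoming votes (ignoring duplicates to neutralize flooding), and updates according to Feldman--Micali-style thresholds: if more than a $2/3$ fraction of the $n$ possible votes are for some bit $b$, lock on $b$ and commit unless already locked; if more than $1/2$ are for $b$, adopt $b$ as the current vote; otherwise, adopt the local copy of $S[i]$ (or a fixed default bit if the processor does not know this coin or if $S[i]$ is not a coin position).

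Second, I would establish the standard invariants. Validity follows by induction: since every update rule selects a value currently held by some good processor, any committed value must trace back to a good input bit. For agreement persistence, I would show that once at least a $(1-1/(3\log n))$ fraction of good processors have locked on the same bit $b$, the thresholds force all other good processors (except up to the $O(1/\log n)$ with a wrong coin) to adopt $b$ in the next round, while the locked processors never change. A Chernoff bound on vote counts, over all $O(s)$ rounds, yields the failure contribution $e^{-C_1 n}$.

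Third, I would prove the progress lemma: in any round $i$ where $S[i]$ is a true random coinflip, conditional on the protocol not already having converged, with probability at least $1/2$ the value of $S[i]$ matches the pre-existing bias among the undecided voters, pushing at least a $1-O(1/\log n)$ fraction of good processors onto the same vote in the following round. Because the coin is independent of the adversary's view (up to what the $O(n/\log n)$ uninformed processors leak), the probability that no such progress occurs over the $t$ coin positions is at most $1/2^{t}$. Combined with the persistence invariant this gives convergence within the $O(s)$ rounds, and the final ``almost-everywhere'' slack of $C_2 n/\log n$ is exactly the set of good processors that either missed the decisive coin or were isolated by adversarial message scheduling.

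The main obstacle I expect is ensuring that the $O(n/\log n)$ good processors who do not know the current coin cannot destabilize agreement already reached among the rest. I would address this by choosing the threshold constants with slack strictly greater than $(1/3-\epsilon) + c/\log n$, so that a $1/\log n$ fraction of miscoined good processors together with up to a $1/3-\epsilon$ fraction of Byzantine processors still cannot push any other good processor across a decision boundary, even against a rushing adversary. Verifying this quantitatively, and checking that it composes cleanly with the Chernoff bound across all $O(s)$ rounds so that the aggregate error remains absorbed into $e^{-C_1 n}+1/2^{t}$, is the technical crux of the argument.
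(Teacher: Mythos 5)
There is a genuine gap in your proposal concerning the bit complexity. You write that ``every processor broadcasts its current vote'' and that a processor ``counts incoming votes'' against thresholds measured as fractions ``of the $n$ possible votes.'' This is full all-to-all communication, which costs $\Theta(n)$ bits per processor per round, not the $O(\log n)$ the theorem claims. The entire point of this subroutine in the paper is that it runs Rabin's coin-flipping agreement on a \emph{sparse} graph: each processor only exchanges votes with its neighbors in a random $k\log n$-regular graph $G$, and the threshold test is against the fraction of a processor's $k\log n$ \emph{local} votes. The paper's key lemma then shows, via a Chernoff bound and a union bound over all possible majority sets $S'$ and small exceptional sets $T$, that with probability $1-e^{-C_1 n}$ all but $C_2 n/\log n$ processors are ``informed'' in every round, meaning their locally observed fraction is a $(1\pm\epsilon_0)$-accurate estimate of the global fraction. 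Without the sparse-graph sampling argument, you have no mechanism to get $O(\log n)$ communication, so the protocol you describe does not prove the stated theorem even if its agreement/validity analysis were carried out correctly.

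A secondary but also substantive issue: you rely on a Feldman--Micali ``lock and commit'' mechanism, but the paper's Algorithm 5 has no per-processor early commitment. Processors just run a fixed number of rounds and output their final vote; the persistence lemma (once all but $Cn/\log n$ good processors vote for $b$, they continue to) does the work that a lock flag would, without any per-processor state that the adversary could exploit when only $1-O(1/\log n)$ processors know each coin. Your ``lock'' rule creates a new failure mode you would have to rule out: a processor locking on the wrong bit in a round where uninformed processors plus Byzantine processors briefly simulate a $2/3$ supermajority over its local view, which is exactly the kind of event the sparse-sampler ``informed'' lemma is designed to bound but which you have not bounded. The rest of your plan --- validity by induction on the update rule, a progress lemma giving probability $1/2$ per successful coin, and multiplying to $1/2^t$ --- matches the paper's Lemmas on validity and termination, so once you add the sparse graph $G$, replace the global vote counts with local counts over $N(v)$, drop the lock, and prove the concentration lemma for informed processors, the outline would align with the paper's argument.
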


Our final result concerns going from almost-everywhere Byzantine agreement to everywhere Byzantine agreement.  It makes use of a simple consequence of our first result which is that not only can almost all of the processors reach agreement on a bit, but they can also generate a random bit.  We actually prove a result below that is stronger than what is necessary to establish Theorem~\ref{t:main}.

\begin{theorem}\label{t:ae2e}
Assume $n/2 + \epsilon n$ good processors agree on a message $M$ and there is an oracle which can generate each bit of a global coin subsequence $(s,t)$ in $O(1)$ time where $t > c\log n$. Then  there is a protocol that ensures with probability $1-1/n^c$  that all good processors output $M$ and n.  Moreover this protocol runs in $O(s \log n)$ time and uses $\tilde{O}( s n^{1/2})$ bits of communication per processor.
\end{theorem}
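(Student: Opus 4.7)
The plan is to implement a simple randomized polling protocol: every processor polls a small, unpredictable committee chosen via the global-coin oracle and outputs the majority answer it receives. Concretely, for each processor $p$, consume $\tilde O(\sqrt n)$ bits from the coin subsequence (fed, say, into an $O(\log n)$-wise independent hash on processor IDs) to pseudorandomly choose a committee $S_p \subseteq [n]$ of size $k=\Theta(\sqrt n\log n)$; send a query to each $q\in S_p$ and reply to every incoming query with one's own current copy of $M$; finally output whichever value appears in a strict majority of the responses, discarding non-replies and malformed messages.

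For correctness, note that by hypothesis at least $n/2+\epsilon n$ processors hold $M$, so at most $n/2-\epsilon n$ processors (bad, or good but disagreeing) hold any other value; the ``correct'' population is thus a $\tfrac12+\epsilon$ fraction of the whole network. Because the relevant coin bits are uniformly random and unrevealed to the adversary at the start of the round, from the adversary's view $S_p$ is a uniformly random $k$-subset of $[n]$, and a Chernoff bound gives that a strict majority of $S_p$ holds $M$ except with probability $\exp(-\Omega(\epsilon^2 k))$. Taking $k=\Omega(\log n/\epsilon^2)$ and union-bounding over the at most $n$ good processors already yields success probability $1-1/n^c$; the $\tilde O(\sqrt n)$ committee size in the protocol provides comfortable slack.

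The main obstacle is handling the adaptive, rushing adversary, which upon seeing $p$'s outgoing queries can still corrupt members of $S_p$ before they respond. The saving grace is the \emph{global} corruption budget of $(1/3-\epsilon)n$ processors across the entire execution: the number of processors in any particular $S_p$ that are bad at the moment of reply is stochastically dominated by a hypergeometric with mean at most $(1/3-\epsilon)k$, and negative association delivers the same Chernoff tail as in the independent case, so the adversary cannot push more than $((1/3-\epsilon)+o(1))k$ of the responses in $S_p$ to non-$M$ values. Added to the good-but-wrong processors this is still strictly less than $k/2$, so the majority vote is preserved.

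For the resource bounds, each processor sends $k=\tilde O(\sqrt n)$ queries of $O(s)$ bits and, by a standard balls-into-bins argument on incoming sample-membership counts, receives only $\tilde O(\sqrt n)$ incoming queries w.h.p., for a per-processor bit complexity of $\tilde O(s\sqrt n)$. Sampling $S_p$ requires $\tilde O(\sqrt n)$ oracle calls at $O(1)$ time each, so sampling plus one query/reply exchange runs in $O(\log n)$ rounds per phase; spreading the extraction of coin bits (or equivalently the processing of $M$) across $s$ phases yields total running time $O(s\log n)$, matching the theorem.
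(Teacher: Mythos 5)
Your polling idea is related in spirit, but the key design decision — deriving the committee $S_p$ as a \emph{public} pseudorandom function of the global coin and $p$'s ID — is fatal against an adaptive, rushing adversary, and it is precisely the issue the paper's protocol is engineered to avoid.

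The global coin is, by definition, agreed upon by almost all good processors, and hence available to the adversary as soon as it is produced (it suffices for the adversary to read it off one corrupted processor, or simply to observe the reconstruction, which is not secret). Since the hash function and all IDs are common knowledge, the adversary can then compute $S_p$ for every $p$ \emph{before any replies are sent}. It can now pick a single confused processor $p$, corrupt all $\tilde O(\sqrt n)$ members of $S_p$ — comfortably within its budget of roughly $n/3$ — and have them reply with a fake message $M'$. Your hypergeometric/negative-association argument implicitly assumes the adversary's corruption set is chosen independently of $S_p$, but an adaptive adversary chooses it \emph{after} $S_p$ is fixed and knowable, so the stochastic domination you invoke does not hold and the Chernoff tail never applies. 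Since the theorem requires \emph{all} good processors to output $M$, one such targeted $p$ already falsifies the conclusion.

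The paper gets around exactly this by splitting the randomness in two: each $p$ \emph{privately and locally} picks a set of processors to ping and privately stamps each ping with a random label $i\in[\sqrt n]$, while the global coin is used only to publish, \emph{after} the pings are committed, a single activation label $k$; only pings labeled $k$ are answered. Because channels are private, even once the adversary learns $k$ it still cannot tell which good recipients $p$ labeled $k$, so it cannot selectively corrupt them. A second component you are missing is overload control: since corrupt processors can send arbitrarily many messages, knowledgeable processors must cap the number of label-$k$ pings they answer, and the randomness of $k$ (revealed after the adversary's pings are fixed) is what guarantees that only an $O(1/\log n)$ fraction of knowledgeable processors get flooded on the active label. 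Your balls-into-bins bound on incoming queries only covers good senders; a flooding adversary can push an uncapped knowledgeable processor to $\Theta(n)$ replies, or, with a cap, silence it on the label that matters unless that label is unpredictable. Finally, note the paper's single pass only succeeds with probability $\Theta(1/\log n)$ and is repeated $\Theta(\log n)$ times using fresh coins from the $(s,t)$ sequence, which is where the $O(s\log n)$ running time and $\tilde O(s n^{1/2})$ bits actually come from; your one-shot analysis (and the attribution of an $O(s)$ factor to per-query size) does not match this structure.
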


\subsection{Techniques}

Our protocol uses a sparse network construction and tournament tree similar to the network and tournament network in \cite{KSSV}.  This past result gives a bandwidth efficient Byzantine agreement algorithm for a \emph{non-adaptive adversary},  which must take over all its processors at the start of the algorithm.  The basic idea of the algorithm from~\cite{KSSV} is that processors compete in local elections in a tournament network, where the winners advance to the next highest level, until finally a small set is elected that is representative in the sense that the fraction of bad processors in this set is not much more than the fraction of bad processors in the general population.

This election approach is \emph{prima facie} impossible with an adaptive adversary, which can simply wait until a small set is elected and then can take over all processors in that set.  To avoid this problem, we make use of two novel techniques.  First, instead of electing processors, we elect \emph{arrays} of random numbers, each generated initially by a processor.  Second, we use secret sharing on these arrays to make sure that 1) the arrays are split among increasingly larger numbers of processors as the array is elected higher up in the tournament; and 2) the secrets in the arrays cannot be reconstructed except at the appropriate time in the protocol.  Critical to our approach is the need to iteratively reapply secret sharing on shares of secrets that were computed previously, in order to increase the number of shares when necessary in the protocol.

Another contribution of this paper is the algorithm we use to run an election.  In~\cite{KSSV}, elections were run by participants.  These elections used Feige's bin selection protocol~\cite{Feige} and a Byzantine agreement algorithm run among the small group of participants to agree on the bin selected by everyone.  Because we are now faced with an adaptive adversary, this approach fails.  In particular, we must now have a much larger sets of processors which come to agreement on the bins selected in Feige's protocol.  To achieve this, we make use of Rabin's algorithm~\cite{rabin1983randomized} run on a sparse network.  
\old{To ensure that Rabin's algorithm is bandwidth efficient, not only do we run it on a sparse network, but we also}
To run Rabin's algorithm, we supply it with an almost everywhere global coin sequence, where coinflips are generated from the arrays described above.  

Our final new technique is a simple but not obvious protocol for going from almost-everywhere Byzantine agreement and the global coin subsequence problem to everywhere Byzantine agreement with an adaptive adversary.  A past result~\cite{KSDISC09} shows that it is possible to do this with a non-adaptive adversary, even without private channels.  However, the technique presented in this paper for solving the problem with an adaptive adversary is significantly different than the approach from~\cite{KSDISC09}.

 \medskip
 
In Section~\ref{s:aeprotocol}, we describe the almost everywhere Byzantine agreement  and global coin subsequence protocols. The scalable version of Rabin's algorithm is in Section \ref{AEBACC}.  In Section~\ref{s:AE2E}, we describe the almost everywhere to everywhere protocol. 

\section{Related work}

As mentioned previously, this paper builds on a result from~\cite{KSSV} that gives a polylogarithmic time protocol with polylogarithmic bits of communication per processor for almost everywhere Byzantine agreement, leader election, and universe reduction in the synchronous full information message passing model with a \emph{nonadaptive} rushing adversary. 
 
Almost everywhere agreement in sparse networks has been studied since 1986. See \cite{KSSV,KSSV2} for references. The problem of almost everywhere agreement for secure multiparty computation on a partially connected network was defined and solved in 2008 in \cite{GO}, albeit with $\Omega(n^{2})$ message cost.

In \cite{KSSV2}, the authors give a sparse network implementation of their protocols from \cite{KSSV}.   It is easy to see that everywhere agreement is impossible in a sparse network where the number of faulty processors $t$ is sufficient to surround  a good processor.  A protocol in which processors use $o(n)$ bits may seem as vulnerable to being isolated as in a sparse network, but the difference is that without access to private random bits, the adversary can't anticipate at the start of the protocol where communication will occur.   In \cite{HKK}, it is shown that even with private channels, if a processor must pre-specify the set of processors it is willing to listen to at the start of a round,  where its choice in each round can depend on the outcome of its random coin tosses, at least one processor must send $\Omega(n^{1/3})$ messages to compute Byzantine agreement with probability at least $1/2+ 1/\log n$.  Hence the only hope for a protocol where every processor sends $o(n^{1/3})$ messages is to design outside this constraint.  Note that the Almost Everywhere Byzantine Agreement protocol falls within this restrictive model, but the Almost Everywhere to Everywhere protocol does not, as the decision of whether a message is listened to  (or acted upon) depends on how many messages carrying a certain value are received so far. 

\section{Almost everywhere protocol} \label{s:aeprotocol}

We first outline the protocol.  We label the processors $p_{1}, p_{2}, \ldots, p_{n}$.  The processors are arranged into nodes in a $q$-ary tree. Each processor appears in polylogarithmic places in each level of the tree, in a manner that will be described below. The levels of the tree are numbered from the leaf nodes (level 1) to the root (level $\ell^*$).  In addition, each processor, $p_{i}$, generates an \emph{array} of random bits, consisting of one \emph{block} for each level of the network and secret shares this with the processors in the $i^{th}$ node on level 1. 

Each node in the tree runs an election among $r$ arrays whereby a subset of $w$ arrays are selected.  In order to run this election at level $\ell$,
the $\ell$ block of each array supplies a random bin choice and random bits to run almost everywhere Byzantine agreement with common global coins to agree on each bin choice of every competing array.  It suffices that some of these coins are random and known almost everywhere. The  shares of  the remaining blocks of arrays  which remain in the competition are further subdivided into more shares and sent to the parent  (and erased from the current processors'
memories.) In this way, the more important the arrays, the more processors  need to be taken over to prevent its correct operation.

 Random bits are revealed as needed by sending the iterated shares of secrets down to {\it all}  the leaves of the subtree rooted where the election is taking place, collecting $\ell$-shares at each level $\ell-1$ to reconstruct $\ell-1$ shares.  In the level 1 nodes, each processor sends the other processors its share.

The winning arrays of a node's election compete in elections at the next higher level.
At the root there are a small number of arrays left to run almost everywhere Byzantine Agreement with a global coin.

The method of secret sharing and iterative secret sharing is described in Section \ref{secretsharing}.
Networks and communication protocols are described in Section \ref{network}; the election routine is described in Section \ref{election}.
The procedure for running almost everywhere Byzantine Agreement with unreliable coins is described in Section~\ref{AEBACC}.  The main procedure for almost everywhere Byzantine agreement is in \ref{main}.  The extension of the almost everywhere Byzantine Agreement protocol to a solution for  the global coin subsequence problem is in Section~\ref{GCS}.  Finally the analysis and correctness proof can be found in Sections \ref{analysis} and \ref{correct}, respectively.

\subsection{Secret sharing} \label{secretsharing}

We assume any (non-verifiable) secret sharing scheme which is a $(n, t+1)$  threshold scheme. That is, each of $n$ players are given shares of size
proportional to the message $M$ and $t+1$ shares are required to reconstruct $M$. Every message which is the size of $M$ is consistent with any subset of $t$ or fewer shares, so no information as to the contents of $M$ is gained about the secret from holding fewer than $t+1$ shares. See \cite{crypto} for details on constructing such a scheme.  We will make extensive use of the following definition.

\begin{definition}
{\it secretShare(s)}:   To share a secret sequence of words $s$ with $n_1$ processes (including itself)  of which $t_1$ may be corrupt, a processor (dealer) creates and distributes shares of each of the words using a $(n_1, t_1+1)$ secret sharing mechanism. 
Note that if a processor knows a share of a secret, it can treat that share as a secret. To share that share with  $n_2$ processors of which at most $t_2$ processors are corrupt,  it creates and  distributes shares of the share
using a $(n_2, t_2+1)$ mechanism and deletes its original share from memory. This can be iterated many times.
We define a {\it 1-share} of a secret to be a share of a secret and an {\it $i$-share} of a secret to be a share of an $i-1$-share of a secret.
\end{definition}
To reveal a secret sequence $s$, all processors which receive a share of $s$ from a dealer sends this shares to a processor  $p$ which computes the secret. This also may be iterated to first reconstruct  $i-1$ shares from $i$ shares, etc.,  and eventually the secret sequence. 
In this paper we assume secret sharing schemes with $t=n/2$. (This is quite robust, as any $t \in [1/3,2/3]$ would work.)

\begin{lemma} \label{secret}  If a secret  is shared in this manner  up to $i$ iterations, then an adversary which possesses  $t_i$ shares of each $i$-share learns no information about the secret.  \end{lemma}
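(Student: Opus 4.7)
The plan is to prove the lemma by induction on the iteration count $i$, using only the information-theoretic threshold property of the underlying $(n,t+1)$ scheme. The base case $i=1$ is simply the standard guarantee of a $(n_1,t_1+1)$ threshold scheme: any set of $t_1$ shares of $S$ has a distribution that does not depend on $S$, so the adversary's view is marginally independent of the secret.

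For the inductive step, assume the claim after $i-1$ iterations and consider $i$ iterations. Let $\vec{Y}=(Y_1,\ldots,Y_m)$ denote the tuple of all $(i-1)$-shares, and let $\vec{X}=(X_1,\ldots,X_m)$ denote the adversary's view, where $X_j$ is the set of at most $t_i$ level-$i$ shares of $Y_j$ held by corrupt processors. Because each $(i-1)$-share is re-shared by its honest dealer using fresh, independent randomness and an $(n_i,t_i+1)$ threshold scheme, conditional on $\vec{Y}$ the sub-views $X_1,\ldots,X_m$ are mutually independent, and by the level-$i$ threshold property each $X_j$ is distributed independently of $Y_j$. Therefore
\[
\Pr[\vec{X}=\vec{x}\mid \vec{Y}=\vec{y}]=\prod_{j=1}^{m}\Pr[X_j=x_j\mid Y_j=y_j]=\prod_{j=1}^{m}\Pr[X_j=x_j],
\]
which does not depend on $\vec{y}$. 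Hence $\vec{X}$ is (marginally) independent of $\vec{Y}$.

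To conclude, I would observe that the secret $S$ is a deterministic function of $\vec{Y}$: starting from the $(i-1)$-shares one can reconstruct the $(i-2)$-shares using any $t_{i-1}+1$ siblings, then the $(i-3)$-shares, and so on up to $S$ itself. Independence of $\vec{X}$ from $\vec{Y}$ therefore implies independence from $S$, which is exactly the claim. Applying the inductive hypothesis at this point is optional; the same argument, repeated one level at a time, is what the induction would carry out, and indeed a single application at the deepest level already suffices.

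The main obstacle in writing this out is keeping the conditional-independence bookkeeping clean: marginally the $X_j$'s are \emph{not} independent, because the $(i-1)$-shares are coupled by the fact that they all come from sharing one common higher-level secret; but conditional on $\vec{Y}$ they are, since the randomness used by different honest dealers at level $i$ is fresh. The argument must stay at the conditional level until after the threshold property is invoked, and only then marginalize. A small additional care point is to note that this is an information-theoretic statement, so it holds even against an adaptive adversary that chooses which shares to read as a function of shares read so far: for any such strategy and any realized subset of size at most $t_i$ per $(i-1)$-share, the joint distribution of the revealed shares is the same under every secret.
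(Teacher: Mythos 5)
Your proof is correct for the lemma as literally stated, and it takes a genuinely different route from the paper. Where the paper argues by combinatorial consistency (``for every candidate value $v$ of the secret there is an assignment of the unknown $i$-shares, and then of the unknown $(i+1)$-shares, consistent with all shares the adversary holds''), you argue by explicit conditional independence: conditioned on the vector of $(i-1)$-shares, the level-$i$ sub-views are mutually independent and each is independent of its parent by the threshold property, so the whole level-$i$ view is marginally independent of $\vec{Y}$ and therefore of $S$. Your version makes the probability bookkeeping explicit, and you correctly flag the two subtle points (the $X_j$'s are only conditionally independent, and the statement must be formulated to hold against adaptive share-reading), which the paper leaves implicit.

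There is, however, a scope gap relative to what the paper actually proves and uses. Look at the paper's inductive step: ``By induction, it is consistent with the known $t_j$ shares on all levels $j \leq i$ $\ldots$''. The paper's argument allows the adversary to retain up to $t_j$ shares at every intermediate level $j$, which is the realistic scenario: an adaptive adversary can corrupt a processor while it still holds a $j$-share (before that share is erased) and keeps that knowledge forever. Your proof models the adversary's view as \emph{only} the level-$i$ shares $\vec{X}$, and you then conclude by noting $S$ is a deterministic function of $\vec{Y}$. That is exactly why you can say ``applying the inductive hypothesis at this point is optional'' — but that remark is precisely what limits the argument to the single-level case. If the adversary also directly knows $t_{i-1}$ coordinates of $\vec{Y}$ (and some lower-level shares), showing $\vec{X} \perp \vec{Y}$ is not by itself enough: you need to show that $\vec{X}$ together with the previously-held shares is jointly independent of $S$. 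That step requires actually invoking the inductive hypothesis (prior-level view $\perp S$) and then layering your conditional-independence step on top (conditioned on $\vec{Y}$, the level-$i$ re-sharing randomness is fresh, so $\vec{X}$ is independent of everything generated before). The fix is small and compatible with your machinery, but as written the induction is vestigial and the proof covers a narrower statement than the one the paper relies on in Section 3.2.4.
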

\begin{proof}
The proof is by induction. For level 1, it is true by definition of secret sharing.
Suppose it is true up to $i$ iterations. 

 Let $v$ be any value. By induction, it is consistent
with the  known $t_j$ shares  on all levels $j \leq i$ and some assignment $S_i$ of values to sets of unknown $n_i-t_{i}$ $i$-shares. Then
consider the shares of these shares that have been spread to level $i+1$.  For each  value of an $i$-share given by $S_i$,  there is an assignment $S_{i+1}$ of values to the unknown  $n_{i+1}-t_{i+1}$ shares  consistent with the $t_{i+1}$ $i+1$- known shares.  Hence knowing in addition the
$t_{i+1}$ $i+1$-shares of each
$i$-share does not reveal any information about  the secret. 
\end{proof}

\subsection{Network and Communication}\label{network}
We first describe the topology of the network and then the communications protoocols. 

\subsubsection{Samplers}
Key to the construction of the network is the definition of an averaging sampler which was also used heavily in \cite{KKKSS-TALG,KSSV2}.
We repeat the definition here for convenience. 

Our protocols rely on the use of averaging (or
oblivious) samplers, families of bipartite graphs which define
subsets of elements such that all but a small number contain at
most a fraction of ``bad'' elements close to the fraction of bad
elements of the entire set.  We assume either a nonuniform model in
which each processor has a copy of the required samplers for a
given input size, or else that each processor initializes by
constructing the required samplers in exponential time.

\begin{definition}
Let ,$[r]$ denote the set of integers $\{1, \ldots, r\}$, and
$[s]^{d}$
 the multisets of size $d$ consisting of elements of $[s]$.
Let $H: [r] \rightarrow [s]^{d}$ be a function assigning
multisets 
of size $d$ to integers. We define the intersection of a multiset $A$ and a set $B$ to be the number of 
elements of $A$ which are in $B$.

$H$ is a $(\theta, \delta)$ sampler if
for every set $S \subset [s]$
at most a $\delta$ fraction of all inputs $x$ have
$\frac{|H(x) \cap S|}{d} > \frac{|S|}{s} + \theta$.
\end{definition}

\medskip

The following lemma establishing the existence of samplers
can be shown using the probabilistic method.  For $s'\in[s]$, 
let $deg(s')=|r'  \in[r] ~|~s.t.~ s \in H(r') \}|$.  A slight modification of Lemma 2 in \cite{KKKSS-TALG} yields:

\begin{lemma} \label{l:samp2}
For every $r,s,d, \theta, \delta > 0$ 
such that $2 \log_2(e) \cdot d \theta^2 \delta> s/r + 1-\delta$,
there exists a $(\theta, \delta)$
sampler $H: [r] \rightarrow [s]^{d}$ and for all $s \in [s]$, $deg(s) < O((rd/s) \log n)$.
 \end{lemma}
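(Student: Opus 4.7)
The plan is to use the probabilistic method, essentially extending the argument of Lemma 2 in \cite{KKKSS-TALG} to also control the degrees. Sample $H$ at random by, for each $x\in[r]$, independently drawing $H(x)$ as a multiset of $d$ i.i.d.\ uniform samples from $[s]$. I will show that, under the stated hypothesis, with positive probability this random $H$ is simultaneously a $(\theta,\delta)$-sampler and has $\deg(s')=O((rd/s)\log n)$ for every $s'\in[s]$, which by a standard derandomization argument yields the existence of a deterministic $H$ with both properties.

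For the sampler condition, fix $S\subseteq[s]$, write $\mu=|S|/s$, and for each $x\in[r]$ let $Y_x$ be the indicator that $|H(x)\cap S|/d>\mu+\theta$. Since $|H(x)\cap S|$ is Binomial$(d,\mu)$, Hoeffding's inequality gives $\Pr[Y_x=1]\leq e^{-2d\theta^2}$. The sum $B(S)=\sum_x Y_x$ of $r$ independent such indicators satisfies, via the standard Chernoff-type binomial-tail bound,
\[
\Pr[B(S)>\delta r]\;\leq\;\binom{r}{\lceil\delta r\rceil}\,e^{-2d\theta^2\delta r}.
\]
Union-bounding over the $2^s$ subsets $S$ shows that $H$ fails to be a $(\theta,\delta)$-sampler with probability at most $2^s\binom{r}{\lceil\delta r\rceil}e^{-2d\theta^2\delta r}$. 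Taking base-$2$ logarithms and applying a sharp entropy/Stirling estimate for $\binom{r}{\delta r}$, together with careful bookkeeping of the $e^{-2d\theta^2\delta r}$ factor, reduces the requirement that this expression be at most $1/2$ to precisely the hypothesis $2\log_2(e)\cdot d\theta^2\delta > s/r + 1-\delta$.

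For the degree condition, observe that for each fixed $s'\in[s]$, $\deg(s')=\sum_{x,j}\mathbf{1}[H(x)_j=s']$ is a sum of $rd$ independent Bernoulli$(1/s)$ variables with mean $rd/s$. A multiplicative Chernoff bound gives $\Pr[\deg(s')>C(rd/s)\log n]\leq n^{-C'}$ for any desired constant $C'$ by choosing $C$ large enough, and a union bound over the $s\leq n$ elements keeps the total ``high-degree'' failure probability below $1/2$. Hence the two failure events together have total probability strictly less than $1$, so a deterministic $H$ with both properties exists.

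The main obstacle I anticipate is arithmetic rather than conceptual: fitting the constants in the sampler calculation so that the right-hand side of the hypothesis comes out to $s/r + 1-\delta$ rather than the weaker $s/r + 1$ one gets from the crude bound $\binom{r}{\delta r}\leq 2^r$. Recovering the extra $-\delta$ term demands the sharper entropy-based estimate for the binomial coefficient, and one must be careful to distinguish between $\log_e$ and $\log_2$ so that the $2\log_2(e)$ prefactor appears correctly after conversion.
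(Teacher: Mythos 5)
The paper does not actually prove Lemma~\ref{l:samp2}: it states only that the lemma ``can be shown using the probabilistic method'' and is ``a slight modification of Lemma 2 in \cite{KKKSS-TALG}.'' So there is no in-paper argument to compare against line by line. Your overall plan --- sample each $H(x)$ as $d$ i.i.d.\ uniform draws from $[s]$, apply Hoeffding per $x$ and then a binomial-tail/Chernoff bound plus a union bound over $2^s$ subsets for the sampler property, and separately apply a multiplicative Chernoff bound plus a union bound over $[s]$ for the degree condition --- is indeed the intended probabilistic-method argument, and your treatment of the degree bound is fine (in the regime $rd/s \ge 1$, which is what the paper actually uses).

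The genuine gap is in the step you yourself flag as the ``main obstacle.'' After the union bound you arrive at requiring
$s + \log_2\binom{r}{\lceil\delta r\rceil} < 2\log_2(e)\,d\theta^2\delta\, r$,
and to match the stated hypothesis $2\log_2(e)\,d\theta^2\delta > s/r + 1 - \delta$ you would need $\frac{1}{r}\log_2\binom{r}{\delta r} \le 1-\delta$. You propose obtaining this from ``the sharper entropy-based estimate,'' i.e.\ $\binom{r}{\delta r}\le 2^{rH(\delta)}$; but $H(\delta)\le 1-\delta$ is false in general --- for instance $H(1/2)=1 > 1/2$ --- so this route does not recover the $-\delta$ term. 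The elementary bound $\binom{r}{\delta r}\le 2^r$ only yields the weaker hypothesis $2\log_2(e)\,d\theta^2\delta > s/r + 1$, and the standard sharper estimates ($\binom{r}{\delta r}\le (e/\delta)^{\delta r}$ or the entropy bound) give $s/r + \delta\log_2(e/\delta)$ or $s/r + H(\delta)$ respectively, neither of which is $s/r + 1 - \delta$. So as written your argument does not establish the lemma with the exact constant in the statement; either the constant in the lemma is itself a by-product of the unspecified ``slight modification'' and should be taken with a grain of salt, or one must track a factor such as $(1-p)^{(1-\delta)r}$ in the binomial tail (or a different form of the cited Lemma 2) that you have discarded. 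For the paper's applications ($\theta=\delta=1/\log n$, $d=O((s/r+1)\log^3 n)$) this slack is immaterial, but it is a real gap if the goal is to prove the lemma exactly as stated.
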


\medskip

For this paper we will use the term {\it sampler} to refer to a  $(1/\log n, 1/\log n)$ sampler, where $d=O( (s/r+1) \log^3 n)$.

\subsubsection{Network structure}

Let $P$ be the set of all $n$ processors. 
The network is structured as a complete $q$-ary  tree. The level 1 nodes (leaves) contain $k_1=\log^3 n$ processors.  Each node at height $\ell >1$ contains $k_{\ell}=q^{\ell} k_1$ processors; there are $(n/ k_{\ell}) \log^3 n$ nodes on level $\ell$; and the root node at height $\ell^*=\log_q (n/k_1)$ contains all the processors.  There are $n$ leaves, each assigned to a different processor.  The contents of each node on level $\ell$ is determined by  a sampler where $[r] $ is the set of nodes, $[s]= P$ and $d =k_{\ell}$. 

The edges in the network are of three types:
\begin{enumerate}
\item
{\it Uplinks:}   The {\it uplinks}  from  processors in a child node on level $\ell$  to processors in a parent node on level $\ell+1$ are determined by a sampler
of degree  $d=q \log^3  n$, $[r]$ is the set of processors in the child node and $[s]$ is the set of processors in the parent node.  
\item
$ \ell-links$:  The $  \ell- links$  between  processors in a node $C$ at level $\ell$ to $C$'s descendants at level $1$ is determined by a sampler with $[r]$ the set of processor in the node $C$, $[s]$ $C$'s level 1 descendants, and $d$ a subset of size $O(\log^3 n)$.  Here, $r=q^{\ell} k_1$; $s=q^{\ell}$; $d=O(\log^3 n)$
and the maximum number of $\ell-links$ incident to  a level 1 node is $O(k_1 \log^4n)$.

\item
{\it Links between processors in a node} are also determined by a sampler of polylogarithmic degree. These are described in the Almost Everywhere Byzantine Agreement with Global Coin protocol. 
\end{enumerate}

From the properties of samplers,  we have: 
\begin{enumerate}
\item Fewer than a $1/\log n$ fraction of the nodes on any level contain less than a $2/3 + \epsilon/2$ fraction of good processors (we call such nodes {\it bad nodes}). 
\item  There are fewer than a $1/\log n$ fraction of processors in every node whose uplinks are connected to fewer than a $2/3 +\epsilon -1/\log n$ fraction of good processors, unless the parent or child, resp. is a bad node.
\item
There are fewer than a $1/\log n$ fraction of processors in a node which are connected  through $\ell-links$ to a majority of bad nodes on level 1, in any subtree which has fewer than a $1/2 =\epsilon,$ fraction of bad level 1 nodes. 
\end{enumerate}

\subsubsection{Communication protocols}

We use the following three subroutines for communication.  Initially each processor $p_i$ shares its secret with all the processors in the $i^{th}$ node at level 1.
 
\smallskip
\noindent
$sendSecretUp(s)$: To  send up a secret sequence $s$, a  processor in a node uses $secretShare(s )$  to  send to each of its neighbors in its parent node (those connected by $uplinks$) a share of $s$.  Then the processor erases $s$ from its own memory.

\smallskip
\noindent
$sendDown(s,i)$: After a secret sequence has been passed up a path to a node $C$, the secret sequence is passed down to  the processors in the 1-nodes in the subtree.  To send a secret sequence $s$ down the tree, each processor in a node $C$ on level $i$ sends its $i$-shares of  $s$ {\it down} the uplinks it came from {\it plus the  corresponding uplinks from each of its other children}.  The processors  on level $i-1$ receiving the $i$-shares use these shares to reconstruct  $i-1$-shares of $s$. This is repeated for lower levels until  all the 2-shares are received by the processors in all the level 1 nodes in  $C$'s subtree. The processors in the 1-node  each send each other all their shares and reconstruct the secrets received. Note that a processor may have received an $i$-share generated from more than one $i-1$ share because of the overlapping of sets  (of uplinks) in the sampler.

\smallskip
\noindent
$sendOpen(s, \ell ):$ This procedure is used by a node $C$ on any level $\ell$  to learn a sequence $s$ held by the set of level 1 nodes in $C$'s subtree.  Each processor in the level 1 node  $A$ sends $s$ up the $\ell-links$ from $A$ to a subset of processors in $C$. 
 A processor in $C$ receiving $s$ from each of the processors in a level 1 node takes a majority to determine the node's  version of  $s$.
Then it takes a majority over  the values obtained from each of the level 1 nodes it is linked to.  

\subsubsection{Correctness of communications}\label{s:correct-comm}

\begin{definition} A {\it good node} is a node with at least $2/3 +\epsilon/2$ fraction of good processors and a bad node is a node which is not good.
(Note that for the lemma below, it suffices that that a good node contain a $1/2+ \epsilon$ fraction of good processors)
A {\it good path} up the tree is a path from leaf to root which has no nodes which become bad during the protocol. 
\end{definition}

\begin{lemma}
\enumerate
\item If  $sendSecretUp(s)$ is executed up a path in the tree and if the adversary learns the secret $s$, there must be at least one bad node on that path. 
\item Assume that $s$ is generated by a good processor and $sendSecretUp(s)$ is executed up a good path in a tree to a node $A$ on level $\ell$, followed by $sendDown(s,\ell)$  and then $sendOpen(s)$.  Further assume there are at least a $1/2 + \epsilon $ fraction of nodes among $A$'s descendants on level 1 which are good, and whose paths to $A$ are good.  Then a $ 1-1/\log n$ fraction of the good processors  in $A$ learn $s$.
\end{lemma}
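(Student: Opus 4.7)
The plan for part 1 is to apply the iterated secret sharing guarantee (Lemma 1) along the path. Under $sendSecretUp$, the $i$-shares at level $i$ are obtained by having each processor at level $i-1$ run a fresh $(n_i,t_i+1)$ secret sharing on its own $(i-1)$-share across its uplink neighbors in the parent node, and then erase the original. If every node along the path is good, then in each such node the adversary controls at most a $1/3-\epsilon/2$ fraction of the processors, which is safely below the threshold $t_i/n_i$ of the scheme (recall $t_i \approx n_i/2$). So at each level $i$ the adversary holds at most $t_i$ of the shares of every $(i-1)$-share. I would then invoke Lemma~\ref{secret} directly to conclude that the adversary learns no information about $s$; the contrapositive is exactly part 1.

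For part 2, I would follow $s$ first downward from $A$ to the level-1 leaves via $sendDown$, and then back upward to $A$ via $sendOpen$. Going down, the invariant I would maintain is: for every good processor $v$ at level $i$ whose path to $A$ is good, $v$ reconstructs its correct $(i-1)$-share from the $i$-shares it receives from its children. Because the uplinks realize a sampler, and $v$'s parent is a good node on the good path, a supermajority of $v$'s uplink neighbors above are good and forward the genuine $i$-share, so the $(n_i,t_i+1)$ decoding (or a suitable majority decode) yields the correct $(i-1)$-share. Unrolling this from level $\ell$ down to level $2$, then using the in-node broadcast at level $1$, I get that in every good level-1 node on a good path, every good processor reconstructs $s$.

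For the upward sweep with $sendOpen$, a processor $u \in A$ first takes a majority over the copies of $s$ it receives from each level-1 node it is $\ell$-linked to, and then a majority over those per-node values. Since a good level-1 node has at least a $2/3+\epsilon/2$ fraction of good processors, the inner majority correctly recovers that node's value whenever the node is good and lies on a good path. For the outer majority, I would invoke the third sampler property of $\ell$-links: in the subtree of $A$, fewer than a $1/2+\epsilon$ fraction of level-1 nodes are bad or on bad paths (by hypothesis), so at most a $1/\log n$ fraction of processors in $A$ have $\ell$-links to a majority of such bad/unreachable leaves. The remaining $1-1/\log n$ fraction of good processors in $A$ therefore receive a correct $s$ from a majority of their $\ell$-linked leaves and output $s$.

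The main obstacle is the bookkeeping that ties the three samplers (uplinks, $\ell$-links, and intra-node links) to the notion of ``good node'' and ``good path.'' I need to check, at each intermediate level, that the sampler guarantees make the number of good uplink neighbors in the parent strictly above the threshold required both for the secret-sharing decode on the way down and for the two-level majority on the way up, so that the $1/\log n$ loss promised at $A$ is not compounded through $\ell-1$ levels. Once that uniform bound is in place, the rest is a direct application of the sampler conclusions listed after the network description, plus a union bound over levels.
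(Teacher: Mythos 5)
Your Part~1 has a genuine gap.  You claim that because every node on the path is good (fraction of bad processors at most $1/3-\epsilon/2$), ``at each level $i$ the adversary holds at most $t_i$ of the shares of every $(i-1)$-share,'' and you then invoke Lemma~\ref{secret} directly.  But the $(i-1)$-shares are not spread over the whole parent node: each processor spreads its share only over its \emph{uplink neighbours}, a sampler-chosen subset of degree $q\log^3 n$.  A $(1/\log n, 1/\log n)$-sampler only guarantees that all but a $1/\log n$ fraction of these uplink sets have a bad fraction close to that of the parent node; for the remaining $1/\log n$ fraction, the bad processors may be arbitrarily concentrated, exceeding the $t_i/n_i=1/2$ threshold, so the adversary \emph{can} reconstruct those $(i-1)$-shares.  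Thus your uniform per-level bound is false, and the hypothesis of Lemma~\ref{secret} is not met as stated.  The paper's proof resolves this not by eliminating the leak but by bounding its accumulation: it allows an extra $1/\log n$ fraction of shares to be revealed per level, observes that there are only $\ell^*=O(\log n/\log\log n)$ levels, and concludes the total fraction of shares ever exposed is $1/3-\epsilon + O(1/\log\log n)$, still safely below the threshold $1/2$ (indeed below $1/3$ as the paper argues).  You flag this issue in your final paragraph as a bookkeeping concern and say you would try to keep the number of good uplink neighbours ``strictly above the threshold'' at every level, but that strategy cannot succeed — the right fix is precisely the compounding argument you were worried about, not avoiding it.

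Your Part~2 is essentially the paper's argument (the paper's own proof is terse here): trace $s$ down the good paths so level-1 nodes recover it, then use the $\ell$-link sampler property together with the $1/2+\epsilon$-fraction hypothesis to conclude that $1-1/\log n$ of the good processors in $A$ see a correct majority.  One caveat: your invariant that \emph{every} good processor on a good path reconstructs its $(i-1)$-share is again too strong for the same sampler reason; what you actually get is that all but a small fraction do, which is still enough because the conclusion only requires a $1-1/\log n$ fraction at $A$.
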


\begin{proof}
In the protocol any secret shared to a good node on level 1 remains hidden from the adversary which receives no more than a $1/3-\epsilon$ fraction of the shares.  If it is passed to a good node on level 2, then since the uplinks are determined by a $(1/\log n, 1/\log n)$-sampler, no more than $1/\log n$ fraction of the uplink sets contain more than $1/3$ fraction of bad processors. Hence   
no more than an additional $1/\log n$ fraction of the 1-shares are revealed because the adversary has too many 2-shares.  Similarly, no more than 
an additional $1/\log n$ fraction of the 2-shares are revealed because the adversary has too many 3-shares. Hence if the secret is passed up a good path,
the adversary does not gain more than $\ell^*/\log n$ additional shares of the secret, or $O(1/\log\log n)$ of the shares, for a total less than $1/3-\epsilon/2$ fraction. Thus by Lemma \ref{secret} the adversary has no knowledge of any secret that is sent up a good path until that secret is released.

We consider a secret released  when it is first sent down from a node $A$. A secret will be reconstructed  by a processor when it is passed down good paths along the uplinks as  $2/3$ of all its shares are returned down the good paths to the leaves. If there are at least $1/2 +\epsilon$ fraction of level 1 nodes which are good and whose paths to $A$ are good, then a $1-1/\log n$ fraction of the good processors in $A$ have $\ell - links$ from a majority of 1-level nodes which have received the correct sequence. 
\end{proof}


\subsection{Election}\label{election}
Here we describe Feige's election procedure \cite{Feige}, adapted to this context.  We assume $r$ candidates are competing in the election.  The election algorithm is given below.

\begin{definition}
Let $numBins = r/ (5c \log^3 n)$, and let a \emph{word} consist of $\log numBins$ bits.  A {\it block} $B$ is a sequence of bits, beginning with an initial word (bin choice) $B(0)$ followed by $r$ words $B(1), B(2),..,B(r) $, which will be used as coins in running Byzantine agreement on each bit of the bin choices for each of the $r$ candidates.  The input to an election is a set of $r$ candidate blocks labelled $B_1,...,B_r$. The output is 
a set of $r/numBins$ indices $W$.  Let $w=|W|=5c \log^3 n$.
  
 \end{definition}

\begin{algorithm}\label{a:ep}
\caption{Election Protocol}
\begin{enumerate}
\item In parallel, for $i=1,...,r$, the processors run almost everywhere Byzantine agreement on the bin choice of each of the $r$ candidate blocks.  Round  $j$ of the Byzantine agreement protocol to determine  $i$'s bin choice  is run  using the $i^{th}$ word of the  $j^{th}$ processor's block $B_j(i)$.   Let $b_1,...,b_r$ be the decided bin choices.
\item
Let  $min  =\min\{i  ~|~  \sum_j B_j(0)=i \}$. Then   $W \leftarrow \{j~|~ B_j(0)=min\}$.
If $|W| < r/numBins$ then $W$ is augmented by adding the first  $r/numBins -|W|$ indices that would otherwise be omitted. 
\end{enumerate} 
\end{algorithm}

If we assume that the bin choices are agreed upon by all processors then Feige's result for the atomic broadcast model holds:

\begin{lemma} \cite{Feige} \label{Feige}
Let $S$ be the set of bin choices generated independently at random. Then even if the adversary sets the remaining bits after seeing the bin choices of $S$, with probability at least $1- 2^{- \epsilon^2 |S|/(3 numBins )}$  there are at least $(1/numBins - \epsilon)|S| $   winners from $S$.  E.g., if  $|S| > 2/3 r$ and
$r/numBins > 5c \log^3  n$  then with probability $1-1/n^c$ the fraction of winners from $S$ is at least $|S|/r  - 1/\log  n$. 
\end{lemma}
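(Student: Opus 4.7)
The plan is to reduce the claim to Feige's classical bin-selection analysis. Once one takes for granted the stated assumption that all processors agree on each bin choice, delivered by the per-candidate almost-everywhere Byzantine agreement in Step~1 of Algorithm~\ref{a:ep}, the remaining game is clean: each of the $|S|$ random bin choices is an independent uniform draw from $[numBins]$, the adversary sees them and then places the other $r-|S|$ bin choices to its advantage, and finally $W$ is computed deterministically from the count profile.

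First I would pin down the adversary's optimal strategy. Let $X_i := |\{j \in S : B_j(0) = i\}|$ denote the number of random choices landing in bin $i$. Because the rule in Algorithm~\ref{a:ep} picks $min$ as the smallest-index bin among those with minimum total count (with augmentation if $W$ ends up too small), the adversary can, using its $r-|S|$ spare votes, pad every bin except a chosen target $j^*$ until $j^*$ becomes the global minimum. A short budget check shows this padding is feasible in the regime of interest: on the high-probability event that all $X_i$ lie in a constant band around $|S|/numBins$, the adversary needs only modest overhead in each of the $numBins-1$ non-target bins. Thus the number of winners from $S$ equals $X_{j^*}$ for the adversary's optimal $j^* = \arg\min_i X_i$, reducing the problem to lower bounding $\min_i X_i$.

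The core estimate is then a multiplicative Chernoff bound. Each $X_i$ is a sum of $|S|$ independent Bernoulli($1/numBins$) variables with mean $\mu = |S|/numBins$, so for any $\delta \in (0,1)$,
\[
\Pr\bigl[X_i \le (1-\delta)\mu\bigr] \le e^{-\delta^2 \mu/3} = e^{-\delta^2 |S|/(3\cdot numBins)}.
\]
Reading the $\epsilon$ in the statement as the multiplicative deviation $\delta$ (so that $(1/numBins - \epsilon)|S|$ should be interpreted as $(1-\epsilon)\mu$) and converting to base $2$ gives exactly the claimed single-bin probability, and a union bound over the $numBins$ choices of $i$ adds only a logarithmic overhead that is comfortably dominated by the exponent in the parameter regime where $|S|/numBins$ is polylogarithmic.

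To finish, I would verify the ``E.g.'' clause by plugging $|S| \ge 2r/3$, $r/numBins \ge 5c \log^3 n$, and $\epsilon = 1/\log n$ into the above, obtaining $2^{-\Omega(c \log n)} = n^{-\Omega(c)}$; the winners from $S$ then number at least $(1-\epsilon)|S|/numBins$ out of $r/numBins$ total winners, giving a fraction $(1-\epsilon)|S|/r \ge |S|/r - 1/\log n$ of all winners. The main delicacy I anticipate is the bookkeeping between the additive and multiplicative readings of $\epsilon$, together with the routine budget check that the adversary can actually force its desired target $j^*$ to be the global minimum, which is straightforward once one conditions on the concentration event for the $X_i$'s.
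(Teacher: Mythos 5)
The paper offers no proof here; the lemma is cited verbatim from Feige \cite{Feige}, so there is no in-paper argument to compare against. That said, your reconstruction is a sound account of how Feige's analysis goes. Whichever bin is selected, it necessarily contains at least $\min_i X_i$ elements of $S$, so the problem reduces to lower-bounding $\min_i X_i$ via a per-bin multiplicative Chernoff bound and a union bound over the $numBins$ bins. Your observation that $(1/numBins - \epsilon)|S|$ must be read multiplicatively as $(1-\epsilon)|S|/numBins$ is also right: under the literal additive reading the exponent $\epsilon^2|S|/(3\,numBins)$ fails to be $\Omega(\log n)$ in the ``E.g.'' regime once $numBins$ is large, whereas the multiplicative reading recovers $\Omega(c\log n)$ and matches Feige's original statement. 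The one thing to tighten is the ``budget check'': it is not needed for the direction the lemma asserts. To lower-bound the $S$-winners you only need the trivial fact that any selected bin contains at least $\min_i X_i$ elements of $S$; the feasibility of forcing the bin achieving $\min_i X_i$ to be selected (which in fact requires no padding at all, since that bin is already the lightest among the random contributions and the adversary can simply avoid it) would only matter if you were arguing the bound is tight, which the lemma does not claim.
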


\subsection{Main protocol for Almost Everywhere Byzantine Agreement}\label{main}

The main protocol for Almost Everywhere Byzantine Agreement is given as Algorithm 2.  Figure 1, which we now describe, outlines the main ideas behind the algorithm.  The left part of Figure 1 illustrates the algorithm when run on a 3-ary network tree.  The processors are represented with the numbers $1$ through $9$ and the ovals represent the nodes of the network, where a link between a pair of nodes illustrates a parent child relationship.  The numbers in the bottom part of each node are the processors contained in that nodes.  Note that the size of these sets increase as we go up tree.  Further note that each processor is contained in more than one node at a given level.  The numbers in the top part of each node represent the processors whose blocks are candidates at that node.  Note that the size of this set remains constant ($3$) as we go from level 2 to level 3.  Further note that each processor is a candidate in at most one node at a given level.

The right part of Figure 1 illustrates communication in Algorithm 2 for an election occurring at a fixed node at level $\ell$.  Time moves from left to right in this figure and the levels of the network are increasing from bottom to top.  Salient points in this figure are as follows.   First, bin choices are revealed by (1) communication in the $sendDown$ protocol that moves hop by hop from level $\ell$ down to level $1$ in the network and (2) communication in the $sendOpen$ protocol that proceeds directly from the level $1$ leaf nodes to the level $\ell$ nodes.  Second, Byzantine agreement occurs at level $\ell$, via communication between the processors in the node at level $\ell$ \emph{and} communication down and up the network in order to expose the coins, one after another, as needed in the course of the Byzantine agreement algorithm.  Finally, shares of the blocks of the winners of the election at the node at level $\ell$ are sent up to the parent node at level $\ell+1$.

\begin{figure}[t]
\begin{center}
\begin{tabular}{ll}
\includegraphics[scale=0.32]{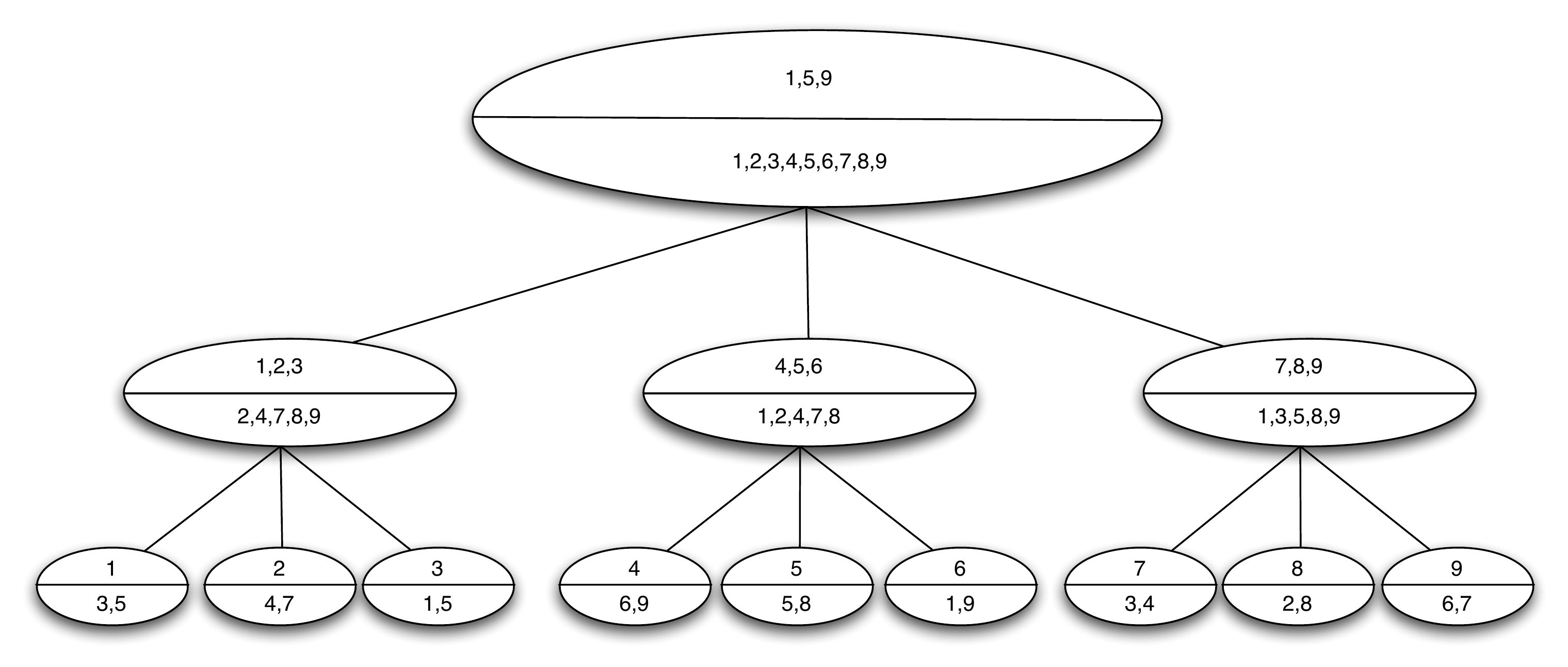} &
\includegraphics[scale=0.32]{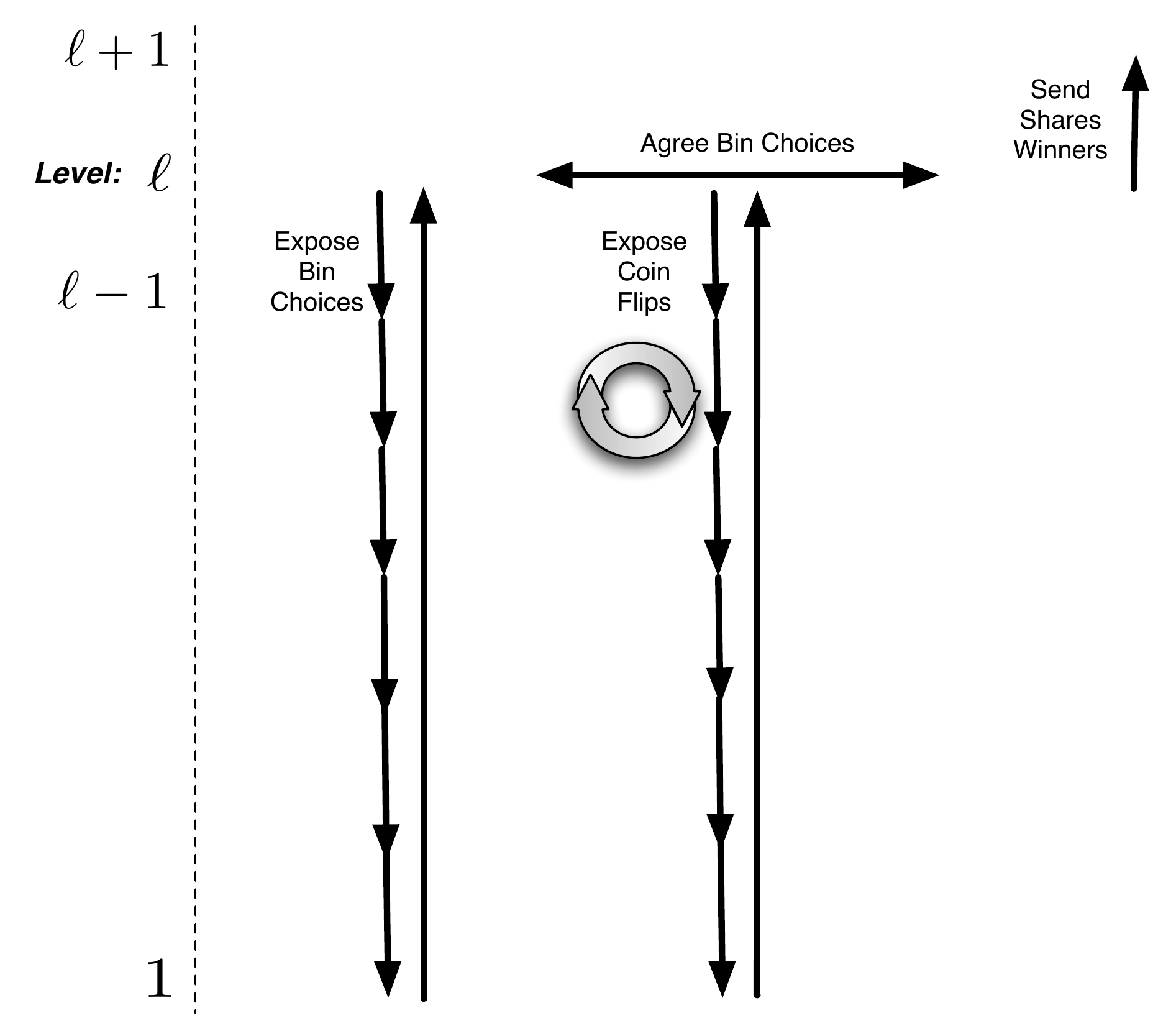}
\end{tabular}
\caption{Left: Example run of Algorithm 2 on a small tree; Right: Communication in different phases of Algorithm 2 for a fixed level $\ell$. }
\end{center}
\label{f:aeBA}
\end{figure}

\begin{algorithm} \label{aeBA}
\caption {Almost Everywhere Byzantine Agreement}
\begin{enumerate}
\item
For all $i$ in parallel
\begin{enumerate}
\item
 Each processor $p_i$ generates an array of  $\ell $ blocks  $B_i$ and  uses $secretshare$ to share its array with the $i^{th}$ level 1 node;
\item
Each processor in the $i^{th}$  level 1 node uses  $sendSecretUp$ to share its 1-share of $B_i$ with its parent node
and then erases its shares from memory.  \\
\end{enumerate}
\item
Repeat for $\ell=2$ to $\ell^*-1$\\
\begin{enumerate}
\item
For each processor in each node $C$ on level $\ell$:\\
   for $t=1,...,w$ and $i=1,...,q-1$, let  $B_{(i-1)w+t}$ be the $t^{th}$ array sent up from child $i$. 
( If $\ell =2  $ then $w=1$ ) \\
$W \leftarrow B_1 |  \! | B_2 | \! | ...| \! | B_{r}$  \\

Let $F$ be the sequence of first blocks of the arrays of $W$, i.e., the $i^{th}$ array of $F$ is the first block of $i^{th}$ array of $W$.  Let $S$ be the sequence of the remaining blocks  of each array of $W$.\\

 {\bf Expose bin choices:}\\
In parallel,  for all candidates $i=1, 2,..,r$

 \begin{enumerate}
\item
 $sendDown(F_i(1))$;
  \item
 $sendOpen(F_i(1), \ell)$. \\
\end{enumerate} 

\item {\bf Agree on bin choices:}\\
If $\ell <\ell^*$ then for rounds  $i=1,...,r$ \\
\begin{enumerate}
\item
{\bf Expose coin flips}: Generate $r$ coinflips for the $i^{th}$ round of Byzantine agreement to decide each of $r$ bin choices.\\
In parallel,  for all contestants $j=1,..,r$ 
\begin{enumerate}
\item
$sendDown(F_i(j))$; upon receiving all $1$-shares, level 1 processors compute the secret bits $F_i(j))$;
\item
 $sendOpen(F_i(j), \ell)$.
\end{enumerate}
\item
Run the $i^{th}$ round of a.e. Byzantine agreement in parallel to decide the bin choice of all contestants. \\
\end{enumerate}

\item {\bf Send Shares of Winners}: 
Let $W$ be the winners of the election decided from the previous step (the lightest bin).  Let $S'$ be the subsequence of $S$ from $W$;   All processors in a node at level $\ell$ use $sendSecretUp(S')$ to  send $S'$ to its parent node
and erase $S'$ from memory. \\
   \\
\end{enumerate}
\item
All  processes in the single node on level $\ell^*$ run a.e. Byzantine agreement once using their initial inputs  as inputs to the protocol  (instead of bin choices) and the remaining block of each contestant. (Note that only two bits of  this block are needed.) \\
For rounds $i = 1, 2,..., qw$, 

 \begin{enumerate}
\item
 $sendDown(F_i(2),i)$;

 \item
 $sendOpen(F_i(2), \ell)$. 
\item
Use $F_i(2)$ to run the $i^{th}$ round of a.e. Byzantine agreement. 
\end{enumerate}

\end{enumerate}
\end{algorithm}

\subsection{Modification to output a sequence of mostly random bits}\label{GCS}

The Almost Everywhere Byzantine Agreement can be modified easily to solve the global coin subsequence $(s,2S/3)$ for $s$ a sequence of $wq$ words.  Add one more block of the desired length to each processor's array at the start. At level $\ell^*$, use sendDown and sendOpen to recover each word, one from each of the $wq$ contestants. The time and bit complexity is given in Theorem \ref{t:ae}. 

\subsection{Bit complexity and running time analysis}\label{analysis}

The proof of the following lemma is given in Section~\ref{s:pf-analysis}

\begin{lemma}\label{l:analysis}
For any $\delta >0$, Almost Everywhere Byzantine Agreement protocol runs in $\tilde{O}(n^{4/\delta}) $ bits  and time $O((\log n)^{4 + \delta} / \log \log n) )$.
\end{lemma}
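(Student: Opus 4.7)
The plan is to account for the time and bit complexities separately, with $q$ (the branching factor of the tournament tree) serving as the dial that trades the two quantities against each other. The tree has depth $\ell^* = \log_q(n/k_1) = \Theta(\log n / \log q)$, and at each level $\ell \ge 2$ an election operates on $r = qw = \Theta(q \log^3 n)$ candidate arrays; each candidate contributes, per level, one block of $r+1$ words (one bin-choice word and $r$ coin words that are fed into the bin-agreement subroutine).

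For the time bound, I would work level by level. At level $\ell$, the election consists of exposing the bin choices (cost $O(\ell)$) and then running $r$ rounds of almost-everywhere Byzantine agreement, where each round is preceded by a batch of $r$ coin exposures. A single exposure at level $\ell$ takes $O(\ell)$ time in isolation ($\ell-1$ rounds for $sendDown$ plus one for $sendOpen$), but if consecutive batches are launched one round apart the successive $sendDown$ wavefronts pipeline down the tree, so the entire sequence of $r$ exposures finishes in $O(r+\ell)$ time rather than $O(r\ell)$. Summing over levels gives total time $O\bigl(\sum_{\ell=2}^{\ell^*}(r+\ell)\bigr) = O(r\ell^* + (\ell^*)^2)$. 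Substituting $q = \log^{\delta} n$, so that $r = \tilde O(\log^{3+\delta} n)$ and $\ell^* = O(\log n/(\delta \log\log n))$, the total time becomes $O(\log^{4+\delta} n/(\delta \log\log n))$, as claimed.

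For the bit complexity, I would again go level by level. At level $\ell$, a given processor sits in $O(\log^3 n)$ different level-$\ell$ nodes by the sampler guarantee, and in each such node it (a) contributes shares to $sendSecretUp$ of the $w$ winning arrays, where each share's size is proportional to the remaining blocks times $r\log n$; (b) for each of the $\Theta(r^2)$ coin exposures at that level, forwards an $O(\log n)$-bit word along each of its $O(q\log^3 n)$ downlinks during $sendDown$; and (c) participates in the corresponding $sendOpen$ along its $\ell$-links. Multiplying the per-node contributions by $O(\log^3 n)$ nodes and summing across all $\ell^*$ levels, the per-processor bit count is a product polynomial in $r$, $q$, $d$, and $\ell^*$. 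Combined with the sampler degree bounds from Lemma \ref{l:samp2}, this is comfortably inside $\tilde O(n^{4/\delta})$ at the chosen $q = \log^{\delta} n$.

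The main obstacle will be controlling the iterated secret-share expansion, since a single share becomes $d = O(q \log^3 n)$ sub-shares at each further level, raising the concern that the total share count per secret could blow up as $d^{\ell^*}$. The proof must exploit two facts: each processor inhabits at most $O(\log^3 n)$ nodes per level by the sampler guarantee, and each election winnows the competing arrays by a factor of $numBins = \Theta(q)$, so only $w = \Theta(\log^3 n)$ arrays, not all $r$, survive to be further subdivided at the next level up. Carrying this winnowing through every $sendSecretUp$, $sendDown$, and $sendOpen$ is what keeps the accounting finite and lets us pick $q = \log^{\delta} n$ to hit both the time and bit bounds stated in the lemma.
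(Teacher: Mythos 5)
Your time analysis is essentially sound, and in fact the pipelining observation (coin exposure for round $i+1$ launched while round $i$ of Byzantine agreement runs, so $r$ exposures at level $\ell$ take $O(r+\ell)$ rather than $O(r\ell)$ steps) is what the paper's formula $O(\ell^*\cdot qw)$ tacitly relies on; stated naively, with each of the $qw$ rounds per level costing $O(\ell^*)$, one would get an extra $\ell^*$ factor. So on the time side you have filled in a step the paper glosses over, and the conclusion $O((\log n)^{4+\delta}/\log\log n)$ matches.

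The bit-complexity argument is where the gap lies. You correctly flag the iterated secret-share expansion, but your proposed resolution is backwards. You write that winnowing to $w$ survivors per election, together with the $O(\log^3 n)$ bound on how many nodes a processor occupies per level, ``keeps the accounting finite'' and leaves the count ``comfortably inside'' $\tilde O(n^{4/\delta})$. In the paper's accounting the $d_m^{\ell}$ expansion is \emph{not} suppressed: because $sendDown$ forwards $\ell$-shares along \emph{all} of a node's downlinks (not only the uplinks they arrived on), the number of distinct candidates whose $\ell$-shares pass through a given processor multiplies by $d_m$ per level of descent, and the dominant per-processor cost is exactly the term $\sum_{\ell} d_m^{\ell}(wq)^2 = \Theta\bigl(d_m^{\ell^*}(wq)^2\bigr)$. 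With $d_m=O(\log^4 n)$, $\ell^*=\log(n/k_1)/\log q$ and $q=(\log n)^{\delta}$, one gets $d_m^{\ell^*}=2^{(4\log\log n+O(1))\log n/(\delta\log\log n)}=n^{4/\delta+o(1)}$, and \emph{this} computation is the entire content of the $\tilde O(n^{4/\delta})$ bound, including where the constant $4$ in the exponent comes from (the $\log^4 n$ sampler degree). Winnowing to $w=O(\log^3 n)$ survivors per election is needed so that the number of distinct arrays stays polylogarithmic, but it does nothing to damp the per-array share proliferation in $sendDown$. Your sketch never carries out the $d_m^{\ell^*}$ computation, instead asserting the result, so it does not actually establish the exponent $4/\delta$; and by attributing the bound to winnowing rather than to this multiplicative blowup being \emph{tolerated}, it misdescribes the mechanism and would not survive being made precise.
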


\subsection{Proof of correctness}\label{correct}

An election is {\it good}  at a node $A$  if the processors can carry out  a.e.\ Byzantine agreement and a $1-1/\log n$ fraction of good processors agree on the result. Recall that the a.e. Byzantine agreement with common coins protocol succeeds w.h.p., if $2/3+\epsilon $ fraction of processors in the node are good
and bits can be generated so that at least $c\log n$ are random and for each of these, there is a fraction of $1-1/\log n$ fraction of good processors which agree on it. 
Then an election is good if  (1) $A$ is a good node; (2) at least $c\log n$ contestants are good processors $p_i$ with good paths from 
the assigned $i^{th}$ level 1 node to  $A$ so that that secrets are correctly transmitted up the tree without the adversary learning the secret until it it released; and (3) there must be a $1/2 + \epsilon$ fraction of level 1 nodes in $A$'s subtree which have good paths to $A$, so that a $1-1/\log n$ fraction of good processors in $A$ learn the random bin selections and random bits of the good arrays that are competing.  Finally, if fewer than $c \log^3 n$ good arrays compete in an election, the probability of correctness of the election is diminished, see Section~\ref{election}. 
  
 Condition (3) is sufficient to show that  between the time the secret is released and the time the processors in the node $A$  learn the secret, the adversary can not selectively decide to prevent the secret from being learned by taking over the processors which know the secret.  As all the secrets are sent down together to all the descendants of $A$, to prevent learning of a secret at $A$, the adversary must prevent a majority of level 1 nodes from hearing from $A$. This would require taking over enough nodes so that half the paths from $A$ to the leaves have at least one bad node in them, in which case we would view the election as bad and all arrays contending in it  bad.  While the adversary does have the ability to selectively make bad an entire election, this does not significantly affect the number of good arrays, since with high probability all elections return a representative fraction of good arrays.  See  Section \ref{election}. Nor does it affect the random bits which are later to be revealed, as these remain hidden from the adversary.  See  Section \ref{s:correct-comm}.

We  now lower  bound the fraction of arrays which remain which are good.

\begin{lemma}
At least a $2/3- 7 \ell /\log n  $ fraction of  winning arrays are good on every level $\ell$, that is, they are generated by good processors and
they are known by $1-O(1/\log n)$ fraction of good processors in their election node. 
In particular, the protocol can be used to generate a sequence of random words, of length $r=wq$ of which a $2/3 + \epsilon - 5 /\log \log  n$ fraction are random and known to $1-1/\log n$ fraction of good processors. 
\end{lemma}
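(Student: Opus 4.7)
The plan is to prove this lemma by induction on the level $\ell$, tracking which arrays satisfy both goodness conditions (generated by a good processor, and known to $1-O(1/\log n)$ of the good processors in the node where it currently resides). For the base case $\ell=1$, each $p_i$ secret-shares its own array in the $i$th level-$1$ node, so an array is good iff $p_i$ is good and the $i$th leaf is good; by the sampler property of leaf assignment and the $2/3+\epsilon$ fraction of good processors, the fraction of good arrays at level $1$ is at least $2/3+\epsilon$.

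For the inductive step, I would pass from level $\ell$ to level $\ell+1$ by accounting for three disjoint sources of loss, each of size at most roughly $2/\log n$ fractionally.  First, by the sampler property of Section~\ref{network}, at most a $1/\log n$ fraction of nodes at level $\ell$ are bad; all arrays whose level-$\ell$ election node is bad can be declared lost.  Second, by the uplink sampler property, at most a $1/\log n$ fraction of the uplinks from a good node to a good parent connect to more than a $1/3$ fraction of bad processors, so those arrays whose iterated shares pass through such paths can be declared lost.  (The $\ell$-link sampler accounts for the analogous loss for $sendOpen$ and $sendDown$, ensuring that in any good election at least a $1-1/\log n$ fraction of good processors in $A$ recover the bin choices, as shown in the lemma of Section~\ref{s:correct-comm}.)  Third, in any good election with at least $2r/3$ good input arrays and at least $c\log n$ good contributions to the common coin, Lemma~\ref{Feige} guarantees that the elected winners contain at least a $(|S|/r - 1/\log n) \ge 2/3 - 1/\log n$ fraction of good arrays w.h.p.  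Summing these losses gives a decrement of at most $7/\log n$ per level, yielding $2/3 - 7(\ell+1)/\log n$ as required.

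For the second claim, plug $\ell = \ell^{*}$ into the inductive bound; since $\ell^{*} = \log_{q}(n/k_{1}) = O(\log n / \log \log n)$ for polylogarithmic $q$, the cumulative loss is $7\ell^{*}/\log n = O(1/\log\log n)$, which starting from the true initial fraction $2/3+\epsilon$ gives the claimed $2/3+\epsilon - 5/\log\log n$ bound.  The $wq$ winners at the root have their last block exposed via $sendDown$ and $sendOpen$ at level $\ell^{*}$, and by Lemma~\ref{secret} these blocks remained information-theoretically hidden to the adversary until release, so the bits contributed by good winners are uniform and independent of the adversary's choices on the remaining blocks.

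The main obstacle is the accounting in the inductive step: because a single bad node or bad election can kill an entire subtree's contribution while the samplers only bound averages, I need to argue that the bad events at successive levels are additive rather than multiplicative.  This is handled by observing that each loss bound above is applied relative to the full set $[r]$ of contestants in a given election (not relative to previously surviving arrays), and that the number of arrays competing at each level is restored to $r$ by concatenating winners from $q$ children, so the inductive invariant bounds an additive drift rather than a geometric one.
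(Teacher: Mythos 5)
Your overall strategy --- induct on the level $\ell$, track the fraction of good arrays, and argue a per-level additive drift of $O(1/\log n)$ --- is the same as the paper's, but your decomposition of the per-level loss is different from the paper's and it has two concrete gaps.

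First, your third source of loss invokes Lemma~\ref{Feige} under the hypothesis that a good election has ``at least $2r/3$ good input arrays.'' But the inductive invariant you are proving says the fraction of good arrays may be as low as $2/3 - 7\ell/\log n$, which is strictly below $2/3$ for $\ell \ge 1$; so you cannot assume the ``E.g.'' form of the lemma applies verbatim at every level, and the argument as written is circular. Worse, Feige's bound only gives a failure probability $2^{-\epsilon^2 |S|/(3\,\text{numBins})}$, which is not ``w.h.p.'' when the number of good arrays $|S|$ in a particular election is small. The paper explicitly isolates these \emph{lop-sided} elections (fewer than $c\log^3 n$ good arrays) and shows their aggregate contribution is only $O(1/\log^4 n)$ because each such election burns off at least $r - c\log^3 n$ good arrays; your proposal has no analogue of this step.

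Second, you do not verify the precondition for the communication lemma of Section~\ref{s:correct-comm}: to conclude that $1-1/\log n$ of the good processors in election node $A$ learn the revealed bin choices and coins, you must show that at least a $1/2+\epsilon$ fraction of the level-1 nodes in $A$'s subtree retain good paths to $A$. Your ``bad uplinks'' item only accounts for the secrecy of shares going up a single step, and the parenthetical about the $\ell$-link sampler silently assumes the precondition rather than establishing it. The paper's treatment of this is the bulk of its proof: it tracks how a $1/\log n$ fraction of bad nodes at each level can make an increasing fraction of paths bad, hence an increasing fraction of elections bad (a bad election kills all arrays that pass through it), and it uses the observation that a bad election does not itself make further paths bad. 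Your final paragraph correctly identifies that something must be said to avoid a multiplicative blow-up, but the resolution you offer --- that the loss is ``applied relative to the full set $[r]$'' --- does not actually discharge the path-tracking obligation, since a single bad node on a low level can simultaneously corrupt many elections above it. A clean version of your induction would need a stronger invariant: both a lower bound on the fraction of good arrays \emph{and} a lower bound on the fraction of leaf-to-level-$\ell$ paths that are entirely good, with the latter decaying by at most $O(1/\log n)$ per level.

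The base case you give (level~1, fraction $\ge 2/3+\epsilon$) and the conclusion (plug in $\ell^*=O(\log n/\log\log n)$) are fine and match the paper.
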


\begin{proof}
With high probability, each good election causes an increase of  no more than a $1/\log n$ in the fraction of bad arrays, unless
there are too few ($<  c \log^3 n$) good arrays participating.   But the latter cannot happen too often.  Let $r$ be the number of contestants.
 If there are $f$ bad arrays overall on level $\ell$ then the total number of such lop-sided elections  is less than $f/(r-c \log^ 3 n)$. A representative set
 of winners would have $c \log^3 n/ r$ fraction of good arrays. Since the number of candidates $r=wq > \log^7$, the fraction of good arrays lost this way is less  than $1/\log^4 n$.
   So in total, the fraction of arrays which are good decreases by no more than $2/\log  n$ on a given level because of good election results.

 We now examine the effect of bad nodes. Each node makes bad any paths that run through it.  A fraction of $1/\log n$ level 1 nodes are bad. In the worst case, all arrays that pass through bad nodes and bad elections are good. Hence  a $1/\log n$ fraction of  bad nodes may eliminate a
$(3/2)\log n$ fraction of good arrays on level 1 and be responsible for making a $2/\log n$ fraction of elections bad in any level  by making bad half the paths of those elections, thus eliminating an additional fraction of $1/\log n$ good arrays.  On level 2, an additional $(5/2)/\log n$ fraction of elections may be made bad by the bad nodes in that level and so on.  Each bad election eliminates all good arrays which pass through it. Note that a bad election does not make additional paths bad, as information and secrets can still be passed through a good node that holds a bad election. 

 Initially a $2/3 + \epsilon $ fraction of the arrays are good. Assuming this is true,  the bad elections  may eliminate no more than $(5/2)/\log n$ fraction of good arrays.  Thus the fraction of elections which become bad on level $\ell^*$  is no more than a total of less than $5 \ell^*/\log n=5/\log \log n$.  Hence the fraction of good arrays at the root node is as stated. 
\end{proof}  

\section{Almost Everywhere to Everywhere}\label{s:AE2E}
We call a processor {\it knowledgeable} if it is good and agrees on a message $m$.   Otherwise, if it is good, it is {\it confused}.  We assume that almost all , i.e., $(1/2+\epsilon )n$,  of the processors are  knowledgeable and can come to agreement on a random number  $k$ in $[1,...,\sqrt{n}]$. We assume private channels.  Here is the protocol.

\begin{algorithm}

\caption{Almost Everywhere To Everywhere with Global Coin}

\begin{enumerate}

\item
Each processor $p$ does the following in parallel:  \\
Randomly pick  $i\in [1,2.,, ,\sqrt{n} ]$ and $j \in [1,...,n]$,  and send  a {\it request label} $i$ to processor $j$.
\item \label{random}
Almost all good processors agree on a random number  $k$ in $[1,...,\sqrt{n}]$.

\item \label{received}
For each processor $p$, if $p$ receives request label $i$ from $q$  and $i=k$ then  
if $p$ has not received more than $ \sqrt{n}\log n$ such messages (it is not {\it overloaded}) ,  
$p$ returns  a message to $q$, 
\item \label{majority}
Let  $k_i$ be the number of messages returned to $p$ by processors sent the request label $i$. Let $i_{max}$ be an $i$ such that  $k_i \geq k_j  \hbox{ for all j}\} \}$. If  the same message $m$  is  returned by $ (1/2 + 3\epsilon/8) a \log n  $  processors which  were sent the request label $i_{max}$ then $p$ decides  $m$.

\end{enumerate}
\end{algorithm}

\subsection{Proof of correctness}

\begin{lemma}\label{singleLoop}
Assume at the start of the protocol $n/2 + \epsilon n$ good processors agree on a message $M$ and can generate a  random bit.  Let $c$ be any constant > 0. Then after a single execution of the loop:
\begin{enumerate}
\item
With probability $4/(\epsilon \log n)-1/n^c$, this protocol results in agreement on $M$. 
\item
With probability $1-1/n^c$, every processor either agrees on $M$ or is undecided.
\end{enumerate}
\end{lemma}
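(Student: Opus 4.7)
The plan is to handle Part~(2) (safety) and Part~(1) (progress) separately, in both cases combining a Chernoff bound on the random recipient choices made in step~1 with the uniformity of the global coin $k$ generated in step~2. Throughout I will write \emph{non-knowledgeable} for a processor that is either bad or confused; by the hypothesis there are at most $(1/2-\epsilon)n$ of these, while the remaining at least $(1/2+\epsilon)n$ knowledgeable processors return $M$ when asked in step~3.

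For Part~(2), I would fix any good processor $p$ that decides some $m \ne M$ and derive a contradiction with high probability. By step~4, at least $(1/2+3\epsilon/8)\,a\log n$ of the $a\log n$ recipients $p$ selected for label $i_{\max}$ must have returned $m$; since knowledgeable processors only ever return $M$, each such responder is non-knowledgeable. But $p$'s recipient list was drawn uniformly at random in step~1, independently of the adversary's actions during this round, so a Chernoff bound shows that the number of non-knowledgeable recipients exceeds $(1/2-\epsilon/2)\,a\log n$ with probability at most $e^{-\Omega(\epsilon^2 a \log n)}$. Choosing $a$ large enough in terms of $c$ and union-bounding over all $n$ good processors and all candidate decision values $m \ne M$ yields Part~(2) with failure probability at most $1/n^c$.

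For Part~(1), I would identify a single progress event $G$ of probability at least $4/(\epsilon\log n)-1/n^c$ whose occurrence forces every good processor to decide $M$. The complement of $G$ decomposes into (a) too many of $p$'s label-$k$ recipients are \emph{overloaded} in step~3 and hence silent, and (b) the surviving recipients fail to contain the required knowledgeable majority. Event (b) is handled by the same Chernoff argument as in Part~(2), applied in the positive direction: with probability at least $1-1/n^c$, for every good $p$ a fraction at least $1/2+\epsilon/2$ of $p$'s chosen label-$k$ recipients are knowledgeable, leaving an $\epsilon/8$ slack above the decision threshold $1/2+3\epsilon/8$ to absorb losses from (a). Event (a) is bounded by Markov's inequality on the load at a random recipient: because each sender commits to its label before $k$ is revealed and labels are uniform in $[\sqrt n]$, the expected number of label-$k$ messages arriving at any fixed recipient is $O(\sqrt n)$, so only a $O(1/\log n)$ fraction of recipients can end up above the $\sqrt n \log n$ overload threshold.

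The main obstacle I expect is accounting for the rushing adversary's ability to redirect additional label-$k$ traffic from bad processors after $k$ is announced, thereby selectively overloading those recipients that $p$ wished to poll. The adversary's contribution is capped by the $(1/3-\epsilon)n$ bad processors under its control, so the number of good recipients that can be pushed over the overload threshold by such targeted flooding is itself at most a $O(1/\log n)$ fraction of the total; this is absorbed by the same $\epsilon/8$ slack. Combining the two $O(1/\log n)$ overload contributions with the $1/n^c$ Chernoff failure bound gives the lower bound $4/(\epsilon\log n)-1/n^c$ on $\Pr[G]$, completing the plan.
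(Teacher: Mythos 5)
Your outline matches the paper's proof step for step: Part~(2) is the paper's Lemma~\ref{requests} (a Chernoff bound on the randomly chosen recipients, with the needed independence supplied by private channels, union-bounded over all processors and labels), and Part~(1) is the paper's Lemma~\ref{overload} (the pigeonhole/Markov bound: at most $\sqrt n/\log n$ labels at any recipient can carry more than $\sqrt n\log n$ requests, so the random $k$ overloads a knowledgeable processor with probability at most $1/\log n$, and Markov bounds the overloaded count by $\epsilon n/4$), followed by a reapplication of Lemma~\ref{requests} with $\epsilon$ reduced to $3\epsilon/4$ to land exactly on the decision threshold $1/2+3\epsilon/8$. The one place you over-engineer is the worry about a rushing adversary ``redirect[ing] additional label-$k$ traffic from bad processors after $k$ is announced'': that cannot happen in this protocol, since step~1 is a completed round before $k$ is produced in step~2 (so every sender's label choices are already committed) and private channels prevent the adversary from learning which good recipients $p$ polled, so the paper neither needs nor reserves the extra $\epsilon/8$ slack you set aside for it.
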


To prove Lemma \ref{singleLoop} we first prove two other lemmas.
 
 \begin{lemma} \label{requests} Suppose there are $(1/2+ \epsilon) n $ knowledgeable processors.  W.h.p., for any one loop, for every processor $p$ and every request label $i$,
 at least $A=(1/2+ \epsilon/2) a \log n$ processors which are sent $i$ by $p$ are knowledgeable and fewer than $B=(1/2 - \epsilon/2)a \log n$ processors which are sent  $i$ by $p$ are corrupt or confused.
 
 \end{lemma}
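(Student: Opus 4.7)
The plan is a direct Chernoff plus union-bound argument. First I fix a processor $p$ and a label $i$, and consider the $a \log n$ destinations to which $p$ sends requests carrying label $i$. By construction these destinations are drawn i.i.d.\ uniformly from $[1,n]$ using $p$'s private coins. The partition of $[1,n]$ into knowledgeable, confused, and corrupt processors is fixed at the start of the loop, so the indicators for ``the $t$-th destination is knowledgeable'' are i.i.d.\ Bernoulli with mean $1/2 + \epsilon$, and the indicators for ``the $t$-th destination is corrupt or confused'' are i.i.d.\ Bernoulli with mean $1/2 - \epsilon$. The multiplicative Chernoff bound then yields that each of the two bad events --- the knowledgeable count falling below $(1/2 + \epsilon/2)\, a \log n$, or the corrupt/confused count rising above $(1/2 - \epsilon/2)\, a \log n$ --- occurs with probability at most $e^{-\Omega(\epsilon^2 a \log n)}$.

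Next I would union-bound over the $n \cdot \sqrt{n} = n^{3/2}$ pairs $(p,i)$, obtaining a total failure probability of $O\bigl(n^{3/2}\, e^{-\Omega(\epsilon^2 a \log n)}\bigr) = O\bigl(n^{3/2 - \Omega(\epsilon^2 a)}\bigr)$. Choosing the constant $a$ sufficiently large in terms of $\epsilon$ and the desired exponent $c$ drives this below $1/n^c$, which is exactly the w.h.p.\ guarantee asked for.

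The only subtle point, which I expect to be the main care rather than a real obstacle, is that the adversary is adaptive and could in principle try to destroy the independence above by corrupting processors after seeing $p$'s choices. The private-channel assumption handles this: the identity of $p$'s destination $j$ is revealed to the adversary only if $j$ is already corrupt, so the adversary gets no information about the set of good destinations that $p$ has sampled, and hence cannot bias which good processors become confused versus remain knowledgeable as a function of $p$'s samples. The classification used in the Chernoff argument is therefore genuinely fixed and independent of $p$'s coin tosses, legitimizing the analysis above.
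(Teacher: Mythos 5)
Your proposal is correct and follows essentially the same route as the paper: both invoke the private-channel assumption to argue that $p$'s choice of destinations is independent of the knowledgeable/confused/corrupt partition, then apply a multiplicative Chernoff bound to the count of knowledgeable (resp.\ corrupt-or-confused) recipients, and finish with a union bound over the $n \cdot \sqrt{n}$ processor--label pairs, choosing $a$ large enough in terms of $\epsilon$ and $c$.
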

 \begin{proof}
 Since there are private channels, the adversary does not know $p$'s requests other than those sent to bad processors .  Hence the choice of the set  of processors which are not knowledgeable is independent of the queries,  and each  event  consisting of a processor querying a knowledgeable processor is an independent random variable. 
 
 Let $X$ be the number of knowledgeable processors sent a value $i$ by processor $p$. $E[X]= a\log n (1/2 + \epsilon)$.
 Since $X$ is the sum of independent random variables we use
 Chernoff bounds:  $Pr[X < (1- \epsilon/2) E[X] \leq e^{(\epsilon^2/8)( a\log n (1/2 + \epsilon)}\leq n^{a\epsilon^2/16)}$ which is less than 
 $n^{-2c}$ for $a= 32c/\epsilon^2$.
 
Taking a union bound over all $i$ and  processors $p$,  for all $X$,  $Pr(X<   (1- \epsilon/2) E[X] )$
is less than $\sqrt{n}(n)n^{-2c} < 1/n^{-c}$. The second part of Lemma \ref{requests} is shown similarly.
\end{proof}

Lemma \ref{requests} immediately implies statement (2) of Lemma \ref{singleLoop}.

We now show Lemma \ref{singleLoop} (1). A knowledgeable processor $p$ which is sent $i=k$ will respond unless overloaded. Each processor can receive no more than $n-1$ requests, or the sender is evidently corrupt.
Then there can be no more than  $\sqrt{n}/\log n$ values of  $i$ for which there are more than  $\sqrt{n} \log  n$ requests labelled $i$. Then we claim:

 \begin{lemma}\label{overload}
The probability that more than $\epsilon n/4 $ knowledgeable processors are overloaded is less than $4/(\epsilon \log n)$.
\end{lemma}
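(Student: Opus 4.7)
The plan is to combine a pigeonhole bound on each recipient's load profile with the uniform, late-revealed choice of $k$, and then convert the resulting per-processor bound into the claimed tail inequality via Markov's inequality. The relatively weak $4/(\epsilon \log n)$ bound in the statement is itself a strong hint that no concentration argument should be needed.

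First, I would fix an arbitrary processor $p$ and, for each $i \in [1,\sqrt{n}]$, let $R_i^p$ denote the number of requests labelled $i$ that $p$ has received. As already noted in the paragraph preceding the lemma, the total $\sum_i R_i^p$ is at most $n-1$, because any sender that would push the total above this is detectably corrupt and its excess requests are discarded. Pigeonholing, the number of labels $i$ with $R_i^p > \sqrt{n}\log n$ is at most $(n-1)/(\sqrt{n}\log n) < \sqrt{n}/\log n$.

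Second, I would invoke the (conditional) uniform distribution of $k$ on $[1,\sqrt{n}]$. By the private-channel assumption, the labels that good senders route to $p$ are invisible to the adversary, and $k$ is produced in step~\ref{random} by the almost-everywhere coin-subsequence protocol only after the request round has completed. Hence, conditional on any realization of $p$'s load profile $(R_1^p,\ldots,R_{\sqrt{n}}^p)$, the value $k$ is uniform over $[1,\sqrt{n}]$ from $p$'s point of view, so
$$\Pr[p\text{ is overloaded}] \;=\; \Pr[R_k^p > \sqrt{n}\log n] \;\le\; \frac{\sqrt{n}/\log n}{\sqrt{n}} \;=\; \frac{1}{\log n}.$$

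Finally, let $Y$ count the number of overloaded knowledgeable processors. By linearity of expectation over the at most $n$ knowledgeable processors, $E[Y] \le n/\log n$, and Markov's inequality yields
$$\Pr[Y > \epsilon n/4] \;\le\; \frac{E[Y]}{\epsilon n/4} \;\le\; \frac{4}{\epsilon \log n},$$
which is the claimed bound. The one delicate point I expect to have to write out carefully is the conditional uniformity of $k$ with respect to each individual good recipient's load profile; this is the place where the private-channel assumption and the late revealing of the coins from the global-coin-subsequence protocol do essential work, since without them the adversary could in principle steer $k$ toward a heavily loaded label at a targeted good processor.
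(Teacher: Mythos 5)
Your proposal is correct and follows essentially the same route as the paper's proof: bound the number of overloaded labels per processor by $\sqrt{n}/\log n$ via the $n-1$ cap on received requests, use the uniformity of $k$ to get a per-processor overload probability of at most $1/\log n$, and finish with linearity of expectation and Markov's inequality. Your write-up is slightly more explicit about why $k$ is conditionally uniform given a processor's load profile (private channels, $k$ revealed after the request round), which is indeed the implicit assumption in the paper's terser "since $k$ is randomly chosen."
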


\begin{proof} 
We call a value $i$ for a processor overloaded if $\sqrt{n} \log n$ request labels equal $i$.
A processor is only overloaded if $k=i$ and $i$ is overloaded.  Since $k$ is randomly chosen, each processor has 
at most a $1/\log n$ chance of being overloaded.  Let $X$ be the number of overloaded knowledgeable processors and
$Y$ be the number of knowledgeable processors.
Then $E[X]= Y/log n$. Using Markov's Inequality, $Pr[X \geq Y(\epsilon/4)] < (Y/\log n)/( Y \epsilon/4)=4/(\epsilon \log n)$. 
\end{proof}
Similar to the argument above, because the adversary does not know the requests and request labels of the requests 
sent to knowledgeable processors, the event sof choosing knowledgeable  processors which are not overloaded are independent random variables and Chernoff bounds apply. With probability $4/(\epsilon \log n)$, there are $(1/2 + 3\epsilon/4 n)$ knowledgeable processors which are not overloaded.
Setting $\epsilon$ to $3\epsilon/4$ in Lemma~ \ref{requests},  we  have w.h.p., for every processor and request label $i$ that $A=(1/2+ 3\epsilon/8) a \log n$ processor
and $B=(1/2 - 3\epsilon/8)a \log n$.  Therefore, with probability $4 / (\epsilon \log  n) -1/n^c$, one loop of this protocol results in agreement on $M$.   As each repetitions of the loop are independent, the probability that they all fail is the product of their individual failure
probabilities, implying the following.
 
\begin{lemma}
Repeating the protocol $X=(c/3) \epsilon  \ln n $ times,  the probability that all processors agree on $M$ and no processor outputs a different message is $1-1/n^c$. 

\end{lemma}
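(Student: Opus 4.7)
The plan is to combine the two parts of Lemma~\ref{singleLoop} by independence across the $X$ repetitions of the loop. Each execution of the loop uses fresh private coin tosses (both for the random $(i,j)$ pairs in Step~1 and for the generation of $k$ in Step~\ref{random}), so the successive iterations are mutually independent. We then bound two bad events separately and union-bound.

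\medskip

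\noindent\textbf{No wrong output.} By Lemma~\ref{singleLoop}(2), in any one loop every good processor either decides $M$ or remains undecided, except with probability at most $1/n^{c'}$ for any desired constant $c'$. Taking a union bound over the $X=(c/3)\epsilon \ln n = O(\log n)$ iterations, the probability that \emph{some} processor ever outputs a value different from $M$ is at most $X/n^{c'}$, which is $\le \tfrac12 n^{-c}$ if we choose $c'=c+1$. Thus, conditioned on the complementary event, every processor's output in every loop lies in $\{M,\perp\}$, and in particular once a processor commits to $M$ its decision is final.

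\medskip

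\noindent\textbf{At least one successful loop.} By Lemma~\ref{singleLoop}(1), a single loop results in universal agreement on $M$ with probability at least $p := 4/(\epsilon \log n) - 1/n^{c'}$. Since the iterations are independent, the probability that no iteration yields universal agreement is at most
\[
(1-p)^{X} \;\le\; e^{-pX} \;\le\; e^{-(4/(\epsilon\log n) - 1/n^{c'})\,(c/3)\epsilon \ln n},
\]
and a standard computation (using $\ln n = \Theta(\log n)$ and absorbing the negligible $1/n^{c'}$ term) shows that choosing the constant hidden in $X$ appropriately makes this at most $\tfrac12 n^{-c}$. Once a loop succeeds, the argument above guarantees that the decision persists and no processor ever deviates from $M$ in subsequent loops.

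\medskip

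Combining the two bounds by a final union gives that with probability at least $1-1/n^{c}$, every good processor agrees on $M$ and no processor ever outputs a value other than $M$, which is the claim. The only subtlety — and the place that requires the most care — is verifying that the ``success'' events across iterations really are independent (hence that $(1-p)^X$ is a valid bound): this follows because the only inputs to iteration $t$ that are not fixed inputs to the protocol are the fresh random label choices and the fresh global coin $k$ generated in Step~\ref{random}, and these are drawn independently of the coins used in earlier loops. The constants $a$ and the exact choice of $X$ can then be tuned (increasing $c'$ above, and scaling $X$ by a constant factor) to drive both error terms below $\tfrac12 n^{-c}$, yielding the stated probability bound.
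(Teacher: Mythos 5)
Your approach — combine the two halves of Lemma~\ref{singleLoop}, use independence across iterations, and bound the probability that every loop fails by $(1-p)^X$ — is the same as the paper's (the paper's entire proof is the one sentence preceding the lemma, asserting independence and taking the product of per-loop failure probabilities). The problem is the arithmetic in your second paragraph, which you wave away as ``a standard computation'' but which does not actually close.

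You take the per-loop success probability to be $p = 4/(\epsilon\log n) - 1/n^{c'}$, exactly as Lemma~\ref{singleLoop}(1) is written, and then bound the all-fail probability by $e^{-pX}$ with $X=(c/3)\epsilon\ln n$. But then
\[
pX \;\approx\; \frac{4}{\epsilon\log n}\cdot\frac{c\,\epsilon}{3}\ln n \;=\; \frac{4c}{3}\cdot\frac{\ln n}{\log n} \;=\; \Theta(c),
\]
a \emph{constant} independent of $n$, since $\ln n/\log n$ is just a base-conversion constant. Hence $e^{-pX}=e^{-\Theta(c)}$ is bounded away from $0$ and is nowhere near $n^{-c}$. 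No choice of ``the constant hidden in $X$'' repairs this: if $p=\Theta(1/\log n)$ then you would need $X=\Omega(\log^2 n)$, not $\Theta(\log n)$, to drive $e^{-pX}$ below $n^{-c}$. The statement as you have argued it is therefore false.

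The resolution is that Lemma~\ref{singleLoop}(1) (and the sentence in the paper that feeds into it, ``With probability $4/(\epsilon\log n)$, there are $(1/2+3\epsilon/4)n$ knowledgeable processors which are not overloaded'') appears to be a typo: Lemma~\ref{overload} shows the \emph{bad} event has probability at most $4/(\epsilon\log n)$, so the per-loop success probability should be $1-4/(\epsilon\log n)-1/n^{c}$, i.e.\ the per-loop \emph{failure} probability is $q=O(1/\log n)$. With that reading, $q^X = \exp\bigl(X\ln q\bigr)\approx n^{-(c\epsilon/3)\ln\log n}\ll n^{-c}$, and $X=(c/3)\epsilon\ln n$ is in fact more than enough. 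Your proof should either carry out the calculation under the corrected success probability (noting the paper's typo), or observe explicitly that the number of repetitions must scale like $\log^2 n$ if the stated $O(1/\log n)$ success probability were taken literally. As written, the claimed ``standard computation'' is the gap.
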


\section{Everywhere Byzantine Agreement}
We run the Almost Everywhere Agreement  protocol modified as in Section \ref{GCS}  to solve the global coin subsequence problem, i.e., it generates a polylogarithmic length sequence containing a subsequence of  $c\log n$ bits  random numbers  generated uniformly and independently at random which are
known to $1-1/\log n$ processors and are in the range $[1,...,\sqrt n]$.
At each step below, $GenerateSecretNumber(i)$ generates the $i^{th}$ number in the sequence.  Since the number of  good
random numbers is greater than $ c  \ln n $, the protocol is successful with probability $1-1/n^c$. 

\begin{algorithm}
\caption{Everywhere Byzantine Agreement}

\begin{enumerate}
\item
Run $Almost\_Everywhere\_Byzantine\_ Agreement$  to come to almost everywhere consensus on a bit;
\item
For $i=1$ to  $wq$ do
\begin{enumerate}
\item
$R \leftarrow $ GenerateSecretNumber$(i)$
\item
Run $AlmostEverywhereToEverywhere(R)$
\end{enumerate}
\end{enumerate}
\end{algorithm}

Finally,  it is easy to see that each execution of the $AlmostEverywhereToEverywhere$ takes $\tilde{O}(\sqrt{n})$ bits per
processor, which dominates the cost per processor. As there are polylogarithmic number ($wq$) of rounds, the communication cost  of $Everywhere Byzantine Agreement$ per processor remains  $\tilde{O}(\sqrt{n})$ bits while the time is polylogarithmic. 
\old{

\subsection{If the confused processors don't know who they are}
Here we allow the good processors to choose different values of $i$ as determined by their ID and a commonly agreed upon string of $2 \log n $ bits. 

We assume almost all good processors agree on a string $s$ of $2\log n$ random bits, and  each processor $x$ uses these bits and its ID to map to an $i \in L=[1,...\sqrt{n}]$, i.e.,i the ID's are $1,2...,n$, then the set of possible mappings  is given by  $L^n$. We denote this $i$ by $\pi(s, x)$.
A processor $x$  is considered {\it overloaded} if it receives more than $ \sqrt{n} \log  n$ queries labelled $k(x,s)$. We would like to show that 
with high probability, no more than $\epsilon / 2 n$ processors who are not confused are overloaded.
 
  Each processor can receive less than  $n/3$ queries partitioned over  $\sqrt{ n}/\log n$ values, so that  the adversary can overload a processor with requests with probability $1/\log n$.  If the queries are set randomly and independently by each processor, the  expected number of overloaded processors is $n/\log n$. Using Chernoff bounds we can show that the probability of overloading a constant fraction of
processors in less than $1/n$.  

However, a random function would require $n \log n$ bits. Instead, we need to show there exists a function with a similar property. Let $S=S_1 \times S_2 \times \cdots S_n \subset  L^n$, such that for all $i$, $|S_i| \leq  \sqrt{n}/\log n$.
A mapping $\pi  \in L^n$ is bad for  $S$ and $P'$ if for all $j \in P$, $ \pi_j \in S_j$.
   We show that there exists a collection of  $n^3$  $\pi$ such that for all  $S$  and all $P'$ with $|P'| \geq n\epsilon$, there are fewer than $n$ mappings are bad for $\pi$ and $P'$.

Suppose we randomly pick $n$ mappings $\pi_j, j=1,,,n$. 

Fix a subset  $P'$ of nodes of size $\epsilon' n$,  and an $S$. The probability that
a single mapping  $\pi$ is bad for $S$ and $P'$ is  $p=(1/\log n)^{\epsilon'  n}$ since there is a $1/\log n$ chance that
$L_i \in S_j$ for each $j$ and there are $\epsilon n j's \in P'$.
 Let $X$ be the number of $ \pi$ which are bad for $S$ and $P'$. Then  
$E[X] =pn^2 $. Then using Chernoff bounds  $Pr[X \geq (1+ 1/(np)) E[X] \leq e^{n}/ (1+ 1/(np)^{(1+ 1/(pn))pn^2} \leq
e^n( np)^ {n})=\leq   e^n (np)^{n} = exp({n +n(\log n - \epsilon' n  \log \log n}))=
e^{\Omega ( n^2 \log \log n)} $.

Taking the union bound, the probability $n$ mappings are bad for any $S$ and $P'$ is  bounded above by the sum over the number of ways to choose $P'$ times the number of ways to choose $S$ so as to make the mapping bad for $P'$ and $S$:

\begin{eqnarray}
& = & {n \choose \epsilon' n} (\sqrt{n}^{\epsilon' n})        e^{-\Omega ( n^2 \log \log n)} \\
& = & 2^n e^{ \epsilon' n \ln n}  e^{-\Omega ( n^2 \log \log n)} \\
& = & e^{-\Omega ( n^2 \log \log n)}
\end{eqnarray}

Thus the probability that $n$ mappings are bad for any $S$ and $P$ is less than 1, which implies there exists
a collection of $n^2$  mappings such that with probability $1-1/n$, the adversary cannot overload more than $\epsilon'< \epsilon/2 $ fraction of knowledgeable processors.  We also note that the probability that good processors overload a processor with queries for any particular value
$i$  is less than $2^{\sqrt{n}}$, since the number would have to exceed the expected number of $\log n$ by
a factor of $\sqrt{n}$. the expected number of such queries is $\log n$, the choices are independent and uniformly random less than 
$1/n$ using Chernoff bounds as it would imply that the number of queries for any given value received by the processor is
greater than a factor of $ \log n$ times the expected number and
Hence we  modfiy the line \ref{received} as follows:

\indent{
For each processor $p$, if $p$ receives $i$ from $q$  and $i=\pi(s,p)$ then  
{\it if $p$ has not received more than $\sqrt{n} \log^2 n$ such messages (it is not {\it overloaded} ),  }
$p$ returns the  message to $q$ along with $s$.}
Now, confused processors will not send too many messages. 

The decision procedure (line \label{majority} is modified as follows:
\indent{
Let  $k_{i,s}$ be the number of messages returned by processors $x$ sending the string $s$ such that $\pi(s,x)=i$. Let $s_{max}$ be an $i$ such that  $k_i \geq k_j  \hbox{ for all j}\} \}$.
Then $p$ takes the majority of messages returned to it by processors to which it sent the value $i_{max}$.}
}

\section{Conclusion}
We have described an algorithm that solves the Byzantine agreement problem with each processor sending only $\tilde{O}(\sqrt{n})$ bits.  Our algorithm succeeds against an adaptive, rushing adversary in the synchronous communication model.  It assumes private communication channels but makes no other cryptographic assumptions.  Our algorithm succeeds with high probability and has latency that is polylogarithmic in $n$.  Several important problems remain including the following:  Can we use $o(\sqrt{n})$ bits per processor, or alternatively prove that $\Omega(\sqrt{n})$ bits are necessary for agreement against an adaptive adversary?  Can we adapt our results to the asynchronous communication model?  Can we use the ideas in this paper to perform scalable, secure multi-party computation for other functions?  Finally, can the techniques in this paper be used to create a practical Byzantine agreement algorithm for real-world, large networks?

\pagebreak

\bibliography{security}

\begin{thebibliography}{10}

\bibitem{ADGH}
Ittai Abraham, Danny Dolev, Rica Gonen, and Joe Halper.
\newblock Distributed computing meets game theory: robust mechanisms for
  rational secret sharing and multiparty computation.
\newblock In {\em Principles of Distributed Computing(PODC)}, 2006.

\bibitem{ADH}
Ittai Abraham, Danny Dolev, and Joe Halper.
\newblock Lower bounds on implementing robust and resilient mediators.
\newblock In {\em IACR Theory of Cryptography Conference(TCC)}, 2008.

\bibitem{agbaria2003overcoming}
A.~Agbaria and R.~Friedman.
\newblock {Overcoming Byzantine Failures Using Checkpointing}.
\newblock {\em University of Illinois at Urbana-Champaign Coordinated Science
  Laboratory technical report no. UILU-ENG-03-2228 (CRHC-03-14)}, 2003.

\bibitem{amir2006scaling}
Y.~Amir, C.~Danilov, D.~Dolev, J.~Kirsch, J.~Lane, C.~Nita-Rotaru, J.~Olsen,
  and D.~Zage.
\newblock {Scaling Byzantine fault-tolerant replication to wide area networks}.
\newblock In {\em Proc. Int. Conf. on Dependable Systems and Networks}, pages
  105--114. Citeseer, 2006.

\bibitem{anderson2002worldwide}
DP~Anderson and J.~Kubiatowicz.
\newblock {The worldwide computer}.
\newblock {\em Scientific American}, 286(3):28--35, 2002.

\bibitem{castro2002practical}
M.~Castro and B.~Liskov.
\newblock {Practical Byzantine fault tolerance and proactive recovery}.
\newblock {\em ACM Transactions on Computer Systems (TOCS)}, 20(4):398--461,
  2002.

\bibitem{Cheng2009219}
Chien-Fu Cheng, Shu-Ching Wang, and Tyne Liang.
\newblock The anatomy study of server-initial agreement for general hierarchy
  wired/wireless networks.
\newblock {\em Computer Standards \& Interfaces}, 31(1):219 -- 226, 2009.

\bibitem{clement-making}
A.~Clement, E.~Wong, L.~Alvisi, M.~Dahlin, and M.~Marchetti.
\newblock {Making Byzantine fault tolerant systems tolerate Byzantine faults}.
\newblock In {\em Proceedings of twenty-first ACM SIGOPS symposium on Operating
  systems principles}, 2009.

\bibitem{1529992}
Allen Clement, Mirco Marchetti, Edmund Wong, Lorenzo Alvisi, and Mike Dahlin.
\newblock Byzantine fault tolerance: the time is now.
\newblock In {\em LADIS '08: Proceedings of the 2nd Workshop on Large-Scale
  Distributed Systems and Middleware}, pages 1--4, New York, NY, USA, 2008.
  ACM.

\bibitem{CMLRS}
J.~Cowling, D.~Myers, B.~Liskov, R.~Rodrigues, and L.~Shrira.
\newblock Hq replication: A hybrid quorum protocol for byzantine fault
  tolerance.
\newblock In {\em In Proceedings of Operating Systems Design and Implementation
  (OSDI)}, San Diego, CA, USA, 2005.

\bibitem{DR}
Danny Dolev and R\"{u}diger Reischuk.
\newblock Bounds on information exchange for byzantine agreement.
\newblock {\em J. ACM}, 32(1):191--204, 1985.

\bibitem{Feige}
Uriel Feige.
\newblock Noncryptographic selection protocols.
\newblock In {\em Proceedings of 40th IEEE Foundations of Computer
  Science(FOCS)}, 1999.

\bibitem{GO}
Juan~A. Garay and Rafail Ostrovsky.
\newblock Almost-everywhere secure computation.
\newblock In {\em EUROCRYPT}, pages 307--323, 2008.

\bibitem{HKK}
Dan Holtby, Bruce~M. Kapron, and Valerie King.
\newblock Lower bound for scalable byzantine agreement.
\newblock {\em Distributed Computing}, 21(4):239--248, 2008.

\bibitem{KKKSS-TALG}
Bruce Kapron, David Kempe, Valerie King, Jared Saia, and Vishal Sanwalani.
\newblock Scalable algorithms for byzantine agreement and leader election with
  full information.
\newblock {\em ACM Transactions on Algorithms(TALG)}, 2009.

\bibitem{KSDISC09}
Valerie King and Jared Saia.
\newblock From almost-everywhere to everywhere: Byzantine agreement in
  $\tilde{O}(n^{3/2})$ bits.
\newblock In {\em International Symposium on Distributed Computing (DISC)},
  2009.

\bibitem{KSSV}
Valerie King, Jared Saia, Vishal Sanwalani, and Erik Vee.
\newblock Scalable leader election.
\newblock In {\em Proceedings of the Symposium on Discrete Algorithms(SODA)},
  2006.

\bibitem{KSSV2}
Valerie King, Jared Saia, Vishal Sanwalani, and Erik Vee.
\newblock Towards secure and scalable computation in peer-to-peer networks.
\newblock In {\em Foundations of Computer Science(FOCS)}, 2006.

\bibitem{kotla2007zyzzyva}
R.~Kotla, L.~Alvisi, M.~Dahlin, A.~Clement, and E.~Wong.
\newblock {Zyzzyva: speculative byzantine fault tolerance}.
\newblock In {\em Proceedings of twenty-first ACM SIGOPS symposium on Operating
  systems principles}, page~58. ACM, 2007.

\bibitem{Malkhi97unreliableintrusion}
Dahlia Malkhi and Michael Reiter.
\newblock Unreliable intrusion detection in distributed computations.
\newblock In {\em In Computer Security Foundations Workshop}, pages 116--124,
  1997.

\bibitem{rabin1983randomized}
M.O. Rabin.
\newblock {Randomized byzantine generals}.
\newblock In {\em Foundations of Computer Science, 1983., 24th Annual Symposium
  on}, pages 403--409, 1983.

\bibitem{rhea2003pond}
S.~Rhea, P.~Eaton, D.~Geels, H.~Weatherspoon, B.~Zhao, and J.~Kubiatowicz.
\newblock {Pond: the OceanStore prototype}.
\newblock In {\em Proceedings of the 2nd USENIX Conference on File and Storage
  Technologies}, pages 1--14, 2003.

\bibitem{shi2004designing}
E.~Shi, A.~Perrig, et~al.
\newblock {Designing secure sensor networks}.
\newblock {\em IEEE Wireless Communications}, 11(6):38--43, 2004.

\bibitem{crypto}
Wade Trappe and Lawrence~C. Washington.
\newblock {\em Introduction to Cryptography with Coding Theory (2nd Edition)}.
\newblock Prentice-Hall, Inc., Upper Saddle River, NJ, USA, 2005.

\bibitem{wright2009contemporary}
A.~Wright.
\newblock {Contemporary approaches to fault tolerance}.
\newblock {\em Communications of the ACM}, 52(7):13--15, 2009.

\bibitem{1098025}
Hiroyuki Yoshino, Naohiro Hayashibara, Tomoya Enokido, and Makoto Takizawa.
\newblock Byzantine agreement protocol using hierarchical groups.
\newblock In {\em ICPADS '05: Proceedings of the 11th International Conference
  on Parallel and Distributed Systems}, pages 64--70, Washington, DC, USA,
  2005. IEEE Computer Society.

\end{thebibliography}
\bibliographystyle{plain}

\appendix

\section{Appendix}

\subsection{Proof of Lemma~\ref{l:analysis}} \label{s:pf-analysis}

\begin{proof}
We first analyze running time. There are  $q$ rounds in the first execution of  Step 2(c)  and $q*w$ rounds on the second and later executions of
Step 2(c) and Step 2.  Each round takes the time needed to traverse up and down to the node running the election or $O(\ell^*)$. 
The total running time is $O(\ell^*(q(w+1) ))$.

We now consider the number of bits communicated per processor. We note that each processor appears in all node only polylogarithmic number of times. Hence it suffices to bound the cost per appearance of processor in a node to get  a $\tilde{O}$ result. 
Step 1 requires each processor to  generate  $q + (\ell^*-1)wq +1$ words. Each share takes the same number of bits as the secret shared,  and there are
$k_1$ shares. When a processor in a level 1 node receives its share,  it  shares it with its parent node via $q \log^3 n$ uplinks, for a total of ${O}((q  \log^3 n + k_1)(q + \ell^*wq + 1))$ words.  sent by each processor. 

In the first execution of Step 2 (a) and (b) , every processor in every node $C$ on level 2 has  2-shares of the $q$ first blocks from its children, each containing $q$ words. These are  sent from every node $C$ on level 2 down its uplinks
to processors in all its  level 1 children,  so that each processor in  $C$ sends down $ {O}(q^2 d_m)$ words in total. Here, $d_m $ is the maximum number of uplinks from a single child that a processor in a node is incident to. The 1-shares are reconstructed from the 2-shares and then the 1-shares are shared with the other processors in the level 1 node, with each level 1 processor sending $q^2$ words in total. 

 Step 2(c) requires  shares of arrays from $w$ winners, or a total of $(\ell^*-1)w$ blocks to be sent secretly to the next level.  Each block has size $qw$ and is shared among ${O}q \log^3 n)$ processors, where $d=q \log^3 n$ is the number of a processor's uplinks. 
 for a total of $\tilde{O}(\ell^* (wq)^2 ))$ words sent. 
 
In the second execution of Step 2(b) and (c) all shares of all $F$ blocks of all $wq$ candidates are sent down from $C$ at level 3.  Each processor has received 
2-shares from  ${O}(d^2_m wq)$ candidates, hence it sends down $\tilde{O}(d^2_m wq)$ shares of blocks of size $wq$ or  a total of ${O}((d^2_m (wq)^2)$
words.
On level $1$ the 2-shares are converted to 1-shares and each 1-share is sent to $k_1$ processors, for a total of $(wq)^2$ words sent to $k_1$ processors or
$O(k_1 (wq)^2)$ words sent. 
The processors in level 1 nodes each determine $(wq)^2$ numbers which they communicate back to the level 3 nodes via the $\ell-links$ to their neighbors in $C$.
Since each level 1 node is incident to  $\tilde{q })$ $\ell-links$, for any level $\ell$ node, the cost of this is $\tilde{O}( (w^2 q^3)$ words.

In the third execution of Step 2(b) and (c), again all shares of $F$ blocks of all $wq$ candidates are sent down. Each has size $wq$ and the analysis is similar, except for one item. As the levels increase to level $\ell$, the number of candidates from which a candidate has received $\ell$-shares
increases by a $d_m$ factor. Hence, each processor at level $\ell$ sends down ${O}((d^{\ell}_m (wq)^2)$ words. 

Step 3 is dominated by Step 2.

The total number of bits sent per processor is the number of times a processor appears in a node on any level times the number of levels times
the costs described above. These additional factors  for appearances and levels are subsumed by the $\tilde{}$ notation as they are polylog .
That is, the cost is determined by summing up the above amounts with the exception of one term which increases per level, that is, $\sum_{\ell} (d^{\ell}_m (wq)^2))$. Hence the cost is 

 $$ \tilde{O}((q + k_1)(q + \ell^* wq) + \ell^*(wq)^2  +  k_1 (wg)^2 + w^2 q^3 +
\sum_{\ell} (d^{\ell}_m (wq)^2))$$

$$=\tilde{O}((w^2 q^3 + 2d_m^{\ell^*} (wq)^2))$$.

Since $w=O(\log^3 n)$, $\ell^*= \log (n/k_1)/\log q$,  $d_m = c' \log^4 n$,   and $k_1$ to $\log^3 n$, then setting  $q=(\log n)^{\delta}$, $\delta >4$,   we have that  the total cost is dominated by the last term and $$ d_m^{\ell^*} (wq)^2=
(c' \log^4 n)^{ \log (n/k_1)/\log q} (wq)^2$$$$= 2^{\log c' + 4 \log \log n ) (\log (n/k_1)/\log q} (c\log^3 n q)^2= \tilde{O}(n^{4/\delta})$$.

I.e., there is a running time of $O(\log(n/\log^3 n)/ \log \log n) (\log n)^{\delta}  \log^3 n)=O((\log n)^{4+\delta} )$ and a bit complexity per processor of
$\tilde{O}(n^{4/\delta})$. 
\end{proof}

\subsection{Almost Everywhere Byzantine Agreement (AEBA) with Unreliable Global Coins}\label{AEBACC}

\begin{algorithm}
\caption{AEBA with Unreliable Coins} \label{alg:aebasparse}
Set $vote \leftarrow b_{i}$; For each round do the following:
\begin{enumerate*}
\item Send $vote$ to all neighbors in $G$;
      \label {step_a}
\item Collect votes from neighbors in $G$;
      \label {step_b}
\item $maj \leftarrow$ majority bit among votes received;
\item $fraction \leftarrow$ fraction of votes received for $maj$;
\item $coin \leftarrow$ result of call to algorithm GetGlobalCoin;
\item If $fraction \geq (1-\epsilon_{0}) (2/3 + \epsilon/2)$ then $vote \leftarrow maj$
\item else
\begin{enumerate}
\item If $coin$ = ``heads'', then $vote \leftarrow 1$, else $vote \leftarrow 0$;
\end{enumerate}
\end{enumerate*}

At the end of all rounds, commit to $vote$ as the output bit;
\end{algorithm}

\subsection{Analysis}

We assume here that the fraction of bad processors is no more than $1/3 + \epsilon$ for some fixed $\epsilon > 0$.  For a processor $v$, let $N(v)$ be the set of neighbors of $v$ in the sparse graph $G$ and let $n$ be the number of nodes in this graph.  We say that a call to $GetGlobalCoin$ \emph{succeeds}, when it selects a bit $b \in \{0,1\}$ with uniform probability and independently from all past events, and that all but $O(n/ \log n)$ processors learn $b$.

\begin{theorem} 
Assume there are at least $r$ rounds in Algorithm~\ref{alg:aebasparse} where the call to $GetGlobalCoin$ succeeds.  Let $C_{1}$ and $C_{2}$ be any positive constants and $k$ depend only on $C_{1}$ and $C_{2}$.  Then, at the end of the algorithm, for any positive constants $C_{1}$ and $C_{2}$ with probability at least $1-e^{-C_{1}n} + 1/2^{r}$, all but $C_{2}n / \log n$ of the good processors commit to the same vote $b$, where $b$ was the input of at least one good processor.   This occurs provided that the graph $G$ is a random $k \log n$ regular graph.
\end{theorem}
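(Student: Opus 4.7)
The plan is to adapt Rabin's classical Byzantine agreement analysis to the sparse random-graph, unreliable-coin setting. I would prove three things: (i) a structural property of $G$ that says almost every good processor sees a representative sample of votes in its neighborhood; (ii) a persistence lemma saying that once almost all good processors hold the same vote $b$, they continue to hold $b$ in every subsequent round; and (iii) a progress lemma saying that in each round in which $GetGlobalCoin$ succeeds, almost-everywhere agreement on a valid bit $b$ is attained with probability at least $1/2$, independently of what happened in earlier rounds. Combining (ii) and (iii) over the $r$ successful coin rounds yields failure probability $1/2^r$, and the bad event for (i) contributes the $e^{-C_1 n}$ term.

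First I would analyze $G$. Because $G$ is a random $k\log n$-regular graph, a Chernoff plus union-bound argument (applied to the adversary's choice of a corruption set of size at most $(1/3-\epsilon)n$) shows that, except with probability $e^{-C_1 n}$ for $k$ large enough in $C_1$, fewer than $C_2 n/\log n$ good processors have a fraction of good neighbors outside $[2/3+\epsilon/2,\, 1]$. Call such well-connected good processors \emph{typical}. Moreover I would also show the analogous statement for the subset of good processors that did not receive the successful global coin (the ``coin-confused'' processors): since there are only $O(n/\log n)$ of them, random regularity ensures that all but $O(n/\log n)$ typical processors have at most, say, an $\epsilon_0/10$ fraction of coin-confused neighbors. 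All of these ``neighborhood-sampling'' properties hold simultaneously with probability $1-e^{-C_1 n}$, and from this point on the argument is deterministic in the graph and random only in the coin.

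For persistence (ii), suppose at the start of some round all but $C_2 n/\log n$ good processors vote $b$. Then a typical good processor $v$ receives at least a $(2/3+\epsilon/2)(1-O(1/\log n))$ fraction of $b$-votes from its good neighbors alone, which comfortably exceeds the threshold $(1-\epsilon_0)(2/3+\epsilon/2)$ for small enough $\epsilon_0$; the conditional branch therefore sets $v$'s vote to $maj=b$, regardless of coin behavior. Hence all typical processors still vote $b$ after the round, and their number exceeds $n - C_2 n/\log n$. Validity falls out as the special case where every good processor starts with the same input.

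For progress (iii), consider a round whose coin succeeds, flipping $b\in\{0,1\}$ uniformly and revealed to all but $O(n/\log n)$ good processors. Fix any behavior of the adversary up through the moment the coin is flipped; this determines, for each typical processor $v$, whether the observed fraction of its majority bit exceeds the threshold (the ``locked'' case) or not (the ``coin'' case). In the locked case, a counting argument using the $(2/3+\epsilon/2)$-fraction of good neighbors shows that all locked typical processors must be locked on the \emph{same} bit $b^\star$ — otherwise two locked typical processors would require their combined good-neighbor votes to exceed $2(1-\epsilon_0)(2/3+\epsilon/2)>4/3$, impossible since good processors cast one vote each. With probability at least $1/2$ the coin equals $b^\star$, and then every typical non-coin-confused processor outputs $b^\star$, leaving at most $C_2 n/\log n$ dissenters. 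Persistence then carries this agreement through every later round.

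The main obstacle will be (iii): because the adversary is rushing and adaptive, he sees the votes before sending his own, and the coin is only almost-everywhere. I need the coin's uniform bit to be independent of which processors end up locked on which side, and I need the fraction of coin-confused processors to be small enough that, together with the bad processors, they cannot tip a typical neighborhood above threshold in the wrong direction. Both requirements are handled by the sampler-style properties of the random regular $G$ established in step (i), but the bookkeeping — carefully separating the randomness in the graph, the randomness in the coin, and the adversary's conditioning — is where the argument has to be written with care. Finally, a union bound over the $r$ coin rounds combines the two failure contributions to yield the stated $e^{-C_1 n}+1/2^r$ bound.
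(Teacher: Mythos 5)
Your three-lemma decomposition --- a sampling property of $G$, a persistence lemma, and a progress lemma driven by the common coin, with the same locked-vs.-coin case split and the same contradiction-by-counting in the progress step --- is exactly the paper's proof. The one place you need to be more careful is step~(i): the concrete property you write down, ``fewer than $C_2 n/\log n$ good processors have a fraction of good neighbors outside $[2/3+\epsilon/2,\,1]$,'' obtained by union-bounding over corruption sets, is not strong enough, since a processor can have a clean (mostly-good) neighborhood that is nonetheless a biased sample of the good \emph{votes} in a given round. The paper's structural lemma instead bounds each processor's observed vote fraction against the true population fraction $f'$ via the two-sided inequality $(1-\epsilon_0) f' \le \mathit{fraction} \le (1+\epsilon_0)(f'+1/3-\epsilon)$, and establishes it simultaneously for \emph{all} candidate voting sets $S'$ and tail sets $T$ (a $2^n \cdot 2^n$ union bound against the $e^{-\Theta(kn)}$ Chernoff tail), which is what lets the progress lemma translate two locked processors' local observations into lower bounds on the population fractions $f'_0, f'_1$ and derive $f'_0 + f'_1 > 2/3 + \epsilon$. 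Your informal phrase ``representative sample of votes'' is the right target; the fix is simply to make the union bound range over $S'$ and $T$ rather than only over corruption sets, after which the rest of your argument matches the paper's.
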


Before proving this theorem we establish the following lemmas.

For a fixed round, let $b' \in \{ 0, 1\}$ be the bit that the majority of good processors vote for in that round.  Let $S'$ be the set of good processors that will vote for $b'$ and let $f' = |S'|/n$.  Let $\epsilon_{0}$ be a fixed positive constant to be determined later.  We call a processor \emph{informed} for the round if the fraction value for that processor obeys the following inequalities:
$$
(1-\epsilon_{0}) f'  \leq fraction \leq (1+\epsilon_{0}) (f' + 1/3 - \epsilon)
$$

\begin{lemma}
For any fixed $C_{1}$ and $C_{2}$, with probability at least $1- e^{-C_{1}n}$, in any given round of Algorithm~\ref{alg:aebasparse}, all but $C_{2}n/\log n$ of the good processors are informed, for $G$ a $k \log n$ regular graph where $k$ depends only on $C_{1}$, $C_{2}$ and $\epsilon_{0}$. 
\end{lemma}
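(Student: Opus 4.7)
The plan is to reduce the lemma to a deterministic sampler property of the random graph $G$. Specifically, I will show that with probability at least $1 - e^{-C_{1} n}$ over the choice of $G$, the graph is a $(\theta,\, C_{2}/(2\log n))$-sampler simultaneously for every subset $T \subseteq V$, i.e.\ at most a $C_{2}/(2\log n)$ fraction of vertices $v$ satisfy $\bigl| |N(v) \cap T|/(k\log n) - |T|/n \bigr| > \theta$, with $\theta$ a small constant depending on $\epsilon_{0}$. Once this deterministic property holds, I apply it once to $T = S'$ (density $f'$) and once to $T = B$ (density at most $1/3 - \epsilon$). Removing the $C_{2}n/\log n$ vertices bad for either set, every remaining good processor $v$ satisfies $|N(v)\cap S'|/(k\log n) \in [(1-\epsilon_{0})f',(1+\epsilon_{0})f']$ and $|N(v)\cap B|/(k\log n) \le (1+\epsilon_{0})(1/3-\epsilon)$. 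The $S'$ bound alone forces $fraction \ge (1-\epsilon_{0})f'$, while the combined $S'\cup B$ bound forces $fraction \le (1+\epsilon_{0})(f' + 1/3 - \epsilon)$, so $v$ is informed.

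To establish the sampler property, I would work in a random-graph model in which each vertex draws its $k\log n$ neighbors essentially independently (e.g.\ the configuration model, whose deviation from exact $k\log n$-regularity contributes only a negligible factor). For a fixed target $T$ of density $\alpha$, Chernoff--Hoeffding applied per vertex gives
\[
\Pr\!\left[\left|\frac{|N(v)\cap T|}{k\log n} - \alpha\right| > \theta\right] \;\le\; 2e^{-2\theta^{2} k\log n} \;=\; n^{-\Omega(\theta^{2} k)}.
\]
Because the neighborhoods are independent across vertices, the count of bad-sample vertices for $T$ is a sum of independent Bernoullis, and a second Chernoff bound shows it exceeds $C_{2}n/(2\log n)$ only with probability $e^{-c' n}$, where $c'$ grows linearly in $k$. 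Union-bounding over all $2^{n}$ subsets $T$ of $V$ (to cover whichever $S'$ and $B$ the adaptive adversary induces) yields overall failure probability at most $2^{n+1}\,e^{-c'n}$, which is below $e^{-C_{1}n}$ once $k$ is chosen sufficiently large in terms of $C_{1}$, $C_{2}$, and $\epsilon_{0}$.

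The main obstacle is the exponential-in-$n$ probability: a first-moment (Markov) argument on the number of bad-sample vertices yields only polynomially small failure probability, which cannot absorb the $2^{n}$ union bound over possible $S'$ that an adaptive adversary may induce through the history. The proof therefore depends essentially on genuine concentration across vertices, obtained via independence of neighborhoods in the configuration model, together with the freedom to enlarge $k$ (permitted because $k$ is allowed to depend on $C_{1}, C_{2}, \epsilon_{0}$) so that the per-vertex failure probability beats the $2^{n}$ cost of the union bound with exponential room to spare.
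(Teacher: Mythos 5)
Your proof is essentially correct and reaches the same quantitative conclusion, but it decomposes the argument differently from the paper. The paper fixes \emph{both} the target set $S'$ and a candidate set $T_\ell$ of $Cn/(2\log n)$ under-informed vertices, bounds the event that every vertex of $T_\ell$ under-samples $S'$ by a single Chernoff bound on the \emph{total} edge count $X$ between $S'$ and $T_\ell$ (after coupling the random regular graph with a sample-with-replacement model so that $\Pr[X < \cdot] \le \Pr[Y < \cdot]$), and then union-bounds over roughly $2^n \cdot 2^n$ choices of $(S', T_\ell)$. You instead prove a uniform ``sampler'' property: for every fixed $T$, first a per-vertex Chernoff shows each vertex mis-samples $T$ with probability $n^{-\Omega(\theta^2 k)}$, then a second Chernoff over the (independent) per-vertex indicators shows that more than $C_2 n/(2\log n)$ vertices mis-sample $T$ with probability $e^{-\Omega(k n)}$, and you union-bound over the $2^n$ choices of $T$ only. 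Your two-Chernoff route removes the explicit enumeration over bad-vertex sets $T_\ell$, which is a modest structural simplification; both arguments then let $k$ grow to swallow the exponential union bound, and both specialize the sampler bound to the sets $S'$ and $S' \cup B$ (you via $S'$ and $B$ separately, which is equivalent since $S'$ and $B$ are disjoint) to obtain the two sides of the ``informed'' inequality. The one caution is your appeal to the configuration model: in that model neighborhoods of distinct vertices are \emph{not} independent (pairings of stubs are globally coupled), so the second Chernoff does not directly apply there. The model you actually need — and the one the paper implicitly uses for $Y$ — is the one in which each vertex independently selects its $k\log n$ out-neighbors uniformly with replacement; you gesture at this but should state it cleanly and, like the paper, include a coupling/monotonicity step to pass back to the random regular graph. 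With that fix, the argument is sound.
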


\begin{proof}
Fix the set $S'$, we know that $S'$ is of size at least $1/3(n + \epsilon)$ since at least half of the good processors must vote for the majority bit.  Let $f' = |S'|/n$. We will also fix a set $T_{\ell}$ which consists of all the processors that have $fraction < (1-\epsilon_{0} f')$.  We will first show that the probability that $T_{\ell}$ is of size $C n/ 2\log n$ for some constant $C$ is very small for fixed $S'$ and $T_{\ell}$, and will then show, via a union bound, that with high probability, for any $S'$ there is no set of $C n/ 2\log n$ processors with $fraction < (1-\epsilon_{0} f')$.  Finally, we will use a similar technique to show that with high probability, no more than $C n/ 2\log n$ processors have $fraction > (1+\epsilon_{0}) (f' + 1/3)$.  This will complete the proof.

To begin, we fix the set $S'$ of size at least $1/3(n + \epsilon)$ and fix $T_{\ell}$ of size $C n/ (2\log n)$.  Let $\xi(S',T)$ be the event that all processors in $T_{\ell}$ have $fraction < (1-\epsilon_{0}) f'$.  Let $X$ be the number of edges from $S'$ to $T_{\ell}$.  Since the graph $G$ is $k \log n$ regular, we know that $Pr(\xi(S',T)) = Pr(X < (1-\epsilon_{0}) f' |T| k \lg n)$.  We will find an upper-bound on the latter probability by using a random variable $Y$ that gives the number of edges from $S'$ to $T_{\ell}$ if the graph $G$ were generated by having $k \log n$ edges from each vertex with endpoint selected uniformly at random.  In particular, $X$ is the number of edges between the two sets if $G$ is a random regular graph, and $Y$ is the number of edges if $G$ is a graph where the out degree of each node is the same but the in-degrees may differ.  We know that $Pr(X < (1-\epsilon_{0}) f' |T| k \lg n) \leq Pr(Y < (1-\epsilon_{0}) f' |T| k \lg n)$ since the model for generating $X$ assumes sampling without replacement and that for $Y$ assumes sampling with replacement.  We will thus bound the probability of deviation for $Y$.  Note that $E(Y) = f' |T| k \lg n$, and so by Chernoff bounds, we can say that 

\begin{eqnarray*}
Pr (Y < (1-\epsilon_{0}) f' |T| k \lg n) & \leq & e^{-(k/4)(\epsilon_{0})^{2} f' C n} \\
& = & e^{-(k/12)(\epsilon_{0})^{2} C n}
\end{eqnarray*}

Where the last step holds since $f' \geq 1/3$.  Let $\xi$ be the union of events $\xi(S',T)$ for all possible values of $S'$ and $T$.  Then we know by union bounds that

\begin{eqnarray*}
Pr (\xi) &= & \sum_{S',T} Pr(\xi(S',T)) \\
& \leq & 2^{n}2^{n}e^{-(k/12)(\epsilon_{0})^{2} C n} \\
& \leq & e^{-C' n}
\end{eqnarray*}

Where the last equation holds for any constant $C'$, provided that $k$ is sufficiently large but depends only on the constants $C$ and $\epsilon_{0}$.  We have thus shown that with high probability, the number of processors with $fraction < (1-\epsilon_{0} f')$ is no more than $C n/ 2\log n$.  

By a similar analysis, letting $S''$ consist of the union of $S'$ and the set of bad processors, we can show that with high probability, the number of processors with $fraction > (1+ \epsilon_{0} f' + 1/3 - \epsilon)$ is no more than $C n/ 2\log n$.  These two results together establish that with high probability, in any round of the algorithm, all but $Cn/ log n$ processors are informed for any constant $C$, provided that $k$ is chosen sufficiently large with respect to $C$ and $\epsilon_{0}$.
\end{proof}

The following Lemma establishes validity (that the output bit will be the same as the input bit of one good processor) and will also be helpful in establishing consistency (that all but $Cn/\log n$ good processors will output the same bit).

\begin{lemma} \label{l:validity}
If in any given round all but $Cn/\log n$ good processors vote for the same value $b'$, for some constant $C$, then for every remaining round, all but $Cn/ \log n$ good processors will vote for $b'$.
\end{lemma}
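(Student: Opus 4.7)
The plan is to use the previous lemma about informed processors together with a careful accounting of the fraction $f'$ of good processors voting $b'$, and then iterate by induction on the round.

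First I would fix the round in which the hypothesis holds: all but $Cn/\log n$ good processors vote for $b'$. Since good processors make up at least a $2/3 + \epsilon$ fraction of the population, this immediately gives
\[
f' \;\geq\; (2/3 + \epsilon) \;-\; C/\log n,
\]
so $f'$ is bounded well above $1/3 + \epsilon$ and in particular $b'$ is itself the majority bit for every processor whose $fraction$ value is near the true fraction. By the preceding lemma, for any constant $C_2$ (and a correspondingly large constant $k$ in the graph's regularity), with probability at least $1 - e^{-C_1 n}$ all but $C_2 n/\log n$ of the good processors are informed in this round, meaning their observed $fraction$ for the majority bit lies in the interval $[(1-\epsilon_0) f',\, (1+\epsilon_0)(f' + 1/3 - \epsilon)]$.

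Next I would show that every informed good processor updates $vote \leftarrow b'$. For such a processor the observed $fraction$ is at least
\[
(1-\epsilon_0) f' \;\geq\; (1-\epsilon_0)\bigl(2/3 + \epsilon - C/\log n\bigr).
\]
For sufficiently large $n$ we have $C/\log n \leq \epsilon/2$, so the right-hand side is at least $(1-\epsilon_0)(2/3 + \epsilon/2)$, which is exactly the threshold used in step~6 of Algorithm~\ref{alg:aebasparse}. Hence each informed processor takes the deterministic branch, sets $vote$ to $maj$, and since the true majority bit for it is $b'$ (because the observed count for $b'$ alone already exceeds half of the votes), it votes $b'$ in the next round. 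The coin outcome is irrelevant for these processors. Choosing $C_2 = C$ (which only forces a larger choice of $k$, still independent of $n$), at most $C n/\log n$ good processors are uninformed, and these are the only good processors that might deviate from $b'$ in the next round.

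This gives the inductive step: if all but $Cn/\log n$ good processors vote $b'$ in round $t$, then the same holds in round $t+1$. Taking a union bound over the remaining rounds (polynomially many) of the bad event in the previous lemma still leaves an $e^{-C_1 n}$-type failure probability after absorbing constants into $C_1$, so the claim holds throughout the protocol with the stated probability. The only subtle point, and the one I would write out most carefully, is verifying that the informed-processor bound $(1-\epsilon_0)(2/3 + \epsilon - C/\log n) \geq (1-\epsilon_0)(2/3 + \epsilon/2)$ holds for all $n$ large enough relative to $C$ and $\epsilon$, so that the choice $C_2 = C$ is self-consistent and the induction actually closes; this is the main obstacle to a clean argument, but it reduces to a one-line inequality.
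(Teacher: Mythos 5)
Your proof is correct and takes essentially the same approach as the paper: bound $f'$ below by $2/3+\epsilon-C/\log n \geq 2/3+\epsilon/2$, observe every informed processor then has $fraction \geq (1-\epsilon_0)(2/3+\epsilon/2)$ and therefore deterministically sets $vote \leftarrow maj = b'$ regardless of the coin, and induct on rounds. You are a bit more explicit than the paper about why $maj=b'$ for informed processors and about the union bound over remaining rounds, but these are refinements of the same argument rather than a different route.
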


\begin{proof}
We will show that if all but $Cn/\log n$ good processors vote for the same value $b'$ in some round $i$, then in round $i+1$, all but $Cn/ \log n$ good processors will vote for $b'$.  Consider what happens after the votes are received in round $r$.  We know that for this round, $f' \geq 2/3 + \epsilon - C/ \log n \geq 2/3 + \epsilon/2$ for $n$ sufficiently large.  Thus, every informed processor in that round will have $fraction \geq (1-\epsilon_{0}) f' \geq (1-\epsilon_{0}) (2/3 + \epsilon/2)$, and so every informed processor will set its vote value, at the end of the round, to $b'$.  It follows that all the processors that were informed in round $i$ will vote for $b'$ in round $i+1$.  Note that this result holds irrespective of the outcome of GetGlobalCoin for the round, even including the case where different processors receive different outcomes from that subroutine.
\end{proof}

\begin{lemma} \label{l:term}
If the call to GetGlobalCoin succeeds in some round (i.e. the same unbiased coin toss is returned to all but $O(n/\log n)$ good players), then with probability at least $1/2$, at the end of that round, all but $O(n/ \log n)$ good processors will have a vote value equal to the same bit.
\end{lemma}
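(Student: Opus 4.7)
The plan is to condition on the outcome of GetGlobalCoin and exploit its independence from the vote distribution. Let $b'$ and $f'$ be as in the preceding analysis: $b'$ is the majority vote among good processors at the start of the round and $f' = |S'|/n$. Since at least $(2/3 + \epsilon)n$ processors are good and $b'$ is their majority, we have $f' \geq 1/3 + \epsilon/2$. Let $c$ be the coin value returned by GetGlobalCoin; by assumption all but $O(n/\log n)$ good processors receive $c$, and by the preceding informedness lemma all but $O(n/\log n)$ good processors are informed. Let $G$ be the good processors that are simultaneously informed and hold the coin value $c$, so $|G| \geq n - O(n/\log n)$. I will show that with probability at least $1/2$ every processor in $G$ finishes the round voting $b'$.

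I would split on how large $f'$ is. In the regime $f' \geq 2/3 + \epsilon/2$, the informedness lower bound $(1-\epsilon_0) f'$ on the local $b'$-tally already exceeds both the step~6 threshold $(1-\epsilon_0)(2/3 + \epsilon/2)$ and $1/2$; consequently every $p \in G$ sees $maj = b'$ with a qualifying $fraction$ and sets $vote \leftarrow b'$ in step~6, regardless of $c$. In the regime $f' < 2/3 + \epsilon/2$ the coin's uniformity and independence yield $\Pr[c = b'] = 1/2$; conditioning on this event, any $p \in G$ whose $fraction$ lies below the step~6 threshold adopts $c = b'$ via step~7. For the remaining $p \in G$ (those passing the step~6 threshold) I would show $maj = b'$ by contradiction: if $maj = 1-b'$, the local $(1-b')$-tally must be at least $(1-\epsilon_0)(2/3 + \epsilon/2)$, so the local $b'$-tally is at most $1 - (1-\epsilon_0)(2/3 + \epsilon/2)$; combining with the informedness lower bound $(1-\epsilon_0) f'$ on the local $b'$-tally forces $f' \leq 1/3 - \Omega(\epsilon)$, contradicting $f' \geq 1/3 + \epsilon/2$. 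Hence $maj = b'$ and $p$ again sets $vote \leftarrow b'$. Taking the union of the two regimes delivers the conclusion.

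The main obstacle I anticipate is the boundary analysis for an informed processor in the moderate regime: the numerical relation among $\epsilon$, $\epsilon_0$, and the step~6 threshold must leave a strict gap so that $maj = 1-b'$ with a threshold-passing $fraction$ is genuinely impossible once informedness is assumed; this is achieved by taking $\epsilon_0$ to be a sufficiently small constant fraction of $\epsilon$. A secondary bookkeeping point is that the three exceptional sets (bad processors, uninformed good processors, and good processors that fail to receive $c$) must aggregate to $O(n/\log n)$, which follows from a single union bound.
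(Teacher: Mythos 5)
Your argument is correct, and the bookkeeping you flag (three exceptional sets aggregating to $O(n/\log n)$; choosing $\epsilon_{0}$ a small enough constant fraction of $\epsilon$ so that $(1-\epsilon_{0})(1+\epsilon) > 1$) all goes through. But your decomposition is genuinely different from the paper's. The paper splits on whether \emph{any} informed processor clears the step-6 threshold: if none do, everyone goes to the coin and the round resolves deterministically once the coin is delivered; if some do, it shows by a \emph{symmetric} two-processor contradiction (bounding both $f'_{0}$ and $f'_{1}$ via the upper side of the informedness window and deriving $f'_{0}+f'_{1} > 2/3+\epsilon$) that all threshold-passers agree on some bit $b$, not necessarily $b'$, and then the coin matches $b$ with probability $1/2$. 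You instead split on the size of $f'$: when $f' \geq 2/3+\epsilon/2$ the informedness \emph{lower} bound alone forces every informed processor past the threshold with $maj=b'$, so the round resolves deterministically; when $f' < 2/3+\epsilon/2$ you pin the threshold-passers to $b'$ via a \emph{one-sided} contradiction (a large $(1-b')$-tally would force the $b'$-tally below its informedness lower bound $(1-\epsilon_{0})f' \geq (1-\epsilon_{0})(1/3+\epsilon/2)$), and then only the event $\{c=b'\}$ is needed. Your route is somewhat tighter: it uses only the lower side of the informedness window and the tally for $b'$ itself, which is the quantity the informedness lemma most directly controls, and it delivers the stronger conclusion that threshold-passers converge to the actual majority bit rather than merely to a common bit. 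The paper's route is shorter to state and does not require distinguishing regimes of $f'$, at the cost of invoking informedness bounds for both candidate majority bits. Both proofs extract the same $1/2$ from the coin's uniformity and independence.
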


\begin{proof}
Fix a round where the call to GetGlobalCoin succeeds.  There are two main cases\\
\noindent
Case 1: No informed processor has $fraction \geq (1-\epsilon_{0}) (2/3 + \epsilon/2)$.  In this case, at the end of the round, with probability $1$, all but $Cn/\log n$ processors will set their vote to the same bit.
\noindent
Case 2: At least one informed processor has $fraction \geq (1-\epsilon_{0}) (2/3 + \epsilon/2)$.  We first show that in this case, all informed processors that have 
$fraction \geq (1-\epsilon_{0}) (2/3 + \epsilon/2)$ will set their vote to the same value at the end of the round.  We show this by contradiction.  Assume there are two processors, $x$ and $y$, where $fraction_{x}$ ($fraction_{y}$) are the fraction values of $x$ ($y$), such that both $fraction_{x}$ and $fraction_{y}$ are greater than or equal to $(1-\epsilon_{0}) (2/3 + \epsilon/2)$, and $x$ sets its vote to $0$ at the end of the round, while $y$ sets its vote to $1$.

Let $f'_{0}$ ($f'_{1}$) be the fraction of good processors that vote for $0$ ($1$) during the round.  Then we have that $fraction_{x} \geq (1-\epsilon_{0}) (2/3 + \epsilon/2)$.  By the definition of informed, we also know that $fraction_{x} \leq (1+\epsilon_{0}) (f'_{0} + 1/3 - \epsilon)$.  This implies that
$$ (1-\epsilon_{0})(2/3 + \epsilon/2) \leq (1+\epsilon_{0})(f'_{0} + 1/3 - \epsilon). $$  Isolating $f'_{0}$ in this inequality, we get that
$$f'_{0} \geq \frac{1/3 + 3/2 \epsilon - \epsilon_{0} - (3/2) \epsilon \epsilon_{0}}{1+\epsilon_{0}}.$$

A similar analysis for $fraction_{b}$ implies that
$$ f'_{1} \geq \frac{1/3 + 3/2 \epsilon - \epsilon_{0} - (3/2) \epsilon \epsilon_{0}}{1+\epsilon_{0}}.$$

But then, for $\epsilon_{0}$ sufficiently small, we have $f'_{0} + f'_{1} > 2/3 + \epsilon$, which is a contradiction.  

Now, let $b'$ be the value that all good and informed processors with $fraction \geq (1-\epsilon_{0}) (2/3 + \epsilon/2)$ set their value to at the end of the round.   With probability $1/2$, the outcome of the GetGlobalCoin is equal to $b$ and in this case, all but $O(n/ \log n)$ informed processors will set their vote value to the same bit $b$ at the end of the round.
\end{proof}

We can now prove Theorem~\ref{t:aebasparse}.\\
\begin{proof}
Lemma~\ref{l:validity} establishes validity: if all processor initially start with the same input bit, then all but $C_{2}n/ \log n$ of the processors will eventually commit to that bit, with probability at least $1-e^{-C_{1}n}$.  Lemmas~\ref{l:term} and~\ref{l:validity} together establish that the probability of having a round in which all but $C_{2}n/ \log n$ processors come to agreement (and after which all but $C_{2}n/ \log n$ processors  will stay in agreement) is at least $1-2^{r}$ where $r$ is the number of rounds in which $GetGlobalCoin$ succeeds.  A simple union bound on the probabilities of error then establishes the result of the theorem.
\end{proof}

\end{document}